\newenvironment{protocol}
{
		\renewcommand{\ALG@name}{Protocol}
		\refstepcounter{algorithm}
		\hrule height.8pt depth0pt \kern2pt
		\renewcommand{\caption}[2][\relax]{
			{\raggedright\textbf{\fname@algorithm~\thealgorithm} ##2\par}%
			\ifx\relax##1\relax 
			\addcontentsline{loa}{algorithm}{\protect\numberline{\thealgorithm}##2}%
			\else 
			\addcontentsline{loa}{algorithm}{\protect\numberline{\thealgorithm}##1}%
			\fi
			\kern2pt\hrule\kern2pt
		}
	}{
	\kern2pt\hrule\relax
}
\DeclareMathOperator{\Tr}{Tr}
\newcommand{\id}{\mathbbm{1}}
\def\diracspacing{0.7pt}
\newcommand{\ketbra}[2]{| \hspace{\diracspacing} #1 \rangle \langle #2 \hspace{\diracspacing} |} 
\newcommand{\corr}[1]{\langle #1 \rangle} 
\newcommand{\de}[1]{\left( #1 \right)}
\newcommand{\DE}[1]{\left\{#1\right\}}
\newtheorem{theorem}{Theorem}
\newtheorem{corollary}{Corollary}
\newtheorem{prop}{Proposition}
\newtheorem{propA}{Proposition}
\newtheorem{defin}{Definition}
\numberwithin{propA}{section}
\let\cat@comma@active\@empty
\begin{document}
\normalem

\title{Self-testing with dishonest parties and device-independent entanglement certification in quantum networks}

\author{Gl\'aucia Murta}
\email{glaucia.murta@hhu.de}
\affiliation{Institut f\"ur Theoretische Physik III, Heinrich-Heine-Universit\"at D\"usseldorf, Universit\"atsstra\ss{}e 1, D-40225 D\"usseldorf, Germany}

\author{Flavio Baccari}
\email{flavio.baccari@mpq.mpg.de}
\affiliation{Max-Planck-Institut f\"ur Quantenoptik, Hans-Kopfermann-Stra\ss{}e 1, 85748 Garching, Germany}

\begin{abstract}
Here we consider the task of device-independent certification of the quantum state distributed in a network when
some of the nodes in this network may collude and act dishonestly. We introduce the paradigm of self-testing with dishonest parties and present a protocol to self-test the GHZ state in this framework. We apply this result for state certification in a network with dishonest parties and also provide robust statements about the fidelity of the shared state. Finally, we extend our results to the cluster scenario, where several subgroups of parties may collude during the state certification. Our findings provide a new operational motivation for the strong definition of genuine multipartite nonlocality as originally introduced by Svetlichny in [PRD 35, 3066 (1987)].
\end{abstract}

\maketitle

\section{Introduction}
With recent progress in quantum network communication we are approaching the technological developments required to implement protocols that go beyond point-to-point quantum key distribution (QKD). 
In particular, proof-of-principle implementations of conference key agreement (CKA)~\cite{CKAexperiment}, as well as its generalization to an anonymous setup~\cite{ACKA_Pappa,ACKA}, have been recently demonstrated~\cite{ACKAexp,ACKAexpcluster}, showing that the power of multipartite entanglement can be already explored.

Genuine multipartite correlations, as for example the one encountered in the the Greenberger-Horne-Zeilinger (GHZ) state~\cite{ghz1989}, constitute the essential resource for important network tasks, such as  secret sharing~\cite{secretsharing}, multiparty quantum computation~\cite{n-computation},  and anonymous quantum transmission~\cite{christandl_quantum_2005}.  
Therefore, certifying the entanglement properties of the state distributed in a network, by a potentially untrusted source, is essential to ensure the correct implementation of such network tasks.

Entanglement certification in a network can be achieved with different adversarial levels. If the parties in the network are honest and trust their measurement apparatuses, entanglement can be verified using quantum state tomography or entanglement witness schemes~\cite{Witness}. If, however, the devices of some (steering scenario) or all the parties (device-independent scenario) are untrusted, i.e. they could be partially characterized or potentially produced by an unturstworthy provider, then multipartite entanglement can be certified using steering inequalities~\cite{GMEsteering} or Bell inequalities~\cite{DIEW}. In particular, in the device-independent scenario, self-testing results allows us to make strong statements about the precise form of the shared state~\cite{st-review}.
Finally, in the network scenario, we can have yet another adversarial level, namely some of the parties in the network may be dishonest and, in particular, collude with each other in order to jeopardize the state certification. In protocols where information needs to be concealed from some of the parties in the network, such as in secret sharing and anonymous communication, the parties have an incentive to act maliciously throughout the protocol in order to try to access the hidden information.

The task of entanglement certification in a network with dishonest parties was first considered in~\cite{Pappa12}. Such a verification scheme lifted the anonymous communication protocol of \cite{christandl_quantum_2005} to the untrusted source scenario~\cite{AT_verified}. Subsequently the certification protocol of~\cite{Pappa12} was improved and implemented in~\cite{McCutcheon16}, and more recently generalized to certify all graph states ~\cite{Unnikrishnan22}.

Here we consider the task of entanglement certification in a quantum network with dishonest parties and uncharacterized devices.
We introduce
the paradigm of self-testing with dishonest parties and present a protocol to self-test the GHZ state
in this framework. We apply this result to design a protocol to certify the GHZ state in a network with
dishonest parties and also provide robust statements about the fidelity of the shared state.
Finally, we extend our results to the cluster scenario, where different subgroups of parties may collude during the state certification.

\section{The network scenario}

We consider a network with $N$ parties/nodes and a source that distributes an $N$-partite state.  Every pair of parties in the network is connected by a private classical channel. The parties in the network may be honest or dishonest.  Let $\mathcal{H}$, $|\mathcal{H}|=k-1$, represent the set of honest parties, and  $\mathcal{D}$, $|\mathcal{D}|=N-k+1$, be the set of dishonest parties. While parties in $\mathcal{H}$ are assumed to follow all the specifications of the protocol, the unknown subset of $N-k+1$  dishonest parties may deviate arbitrarily from the protocol's description and even control the source in order to jeopardize the state certification.
Additionally, we do not make any assumptions about the internal working of the devices of the honest parties, i.e, we consider a device-independent scenario.
The goal of the parties is to certify, in a device-independent way, the state distributed by the source in the presence of potential dishonest parties.
For this work we assume an IID (identically and independently distribute) setup, i.e., that the distributed quantum state    and the strategies applied by the honest/dishonest parties are the same in every round.

For the proposed certification scheme, we  will consider a Bell scenario involving dishonest parties. Assume that each party receives one dichotomic input {$x_i \in \DE{0,1}$} and has to provide a dichotomic output {$a_i \in \DE{0,1}$}. Upon collecting many rounds of outputs, the corresponding statistics is described by a collection of conditional distributions $p(a_1 \ldots a_N|x_1 \ldots x_N)$. The most general classical correlations achieved in the network scenario are then described by
\begin{align}\label{eq:netlocalmodel}
\begin{split}
 p(a_1a_2\ldots a_N&|x_1x_2\ldots x_N)\\
 &=   \int d\rho_{\lambda} p(\vec{a}_{\mathcal{D}} | \vec{x}_{\mathcal{D}}, \lambda) \prod_{i \in \mathcal{H}} p(a_i | x_i, \lambda) ,
 \end{split}
\end{align}
where $\vec{a}_{\mathcal{D}}$ and $\vec{x}_{\mathcal{D}}$ are vectors collecting the outputs and inputs of all the dishonest parties. Note that $p(\vec{a}_{\mathcal{D}} | \vec{x}_{\mathcal{D}})$ is allowed to be an arbitrary (even signaling) probability distribution, to account for the most general strategy that the dishonest parties can apply. 

In a quantum realization, the action of the honest parties is described by measuring a binary observable $A_{x_i}^{(i)} = \Pi_{0| x_i} - \Pi_{1| x_i}$, where $\Pi_{a_i| x_i}$ is the POVM element associated with outcome $a_i$ for party $i$. In contrast, we associate a global observable for the action of the dishonest parties. Since we will consider a Bell inequality that only depends on the parity of the dishonest parties' outcomes, $J(\vec{a}_{\mathcal{D}}) =\bigoplus_{j\in\mathcal{D}}a_j$, we can define the following observable to describe their action
\begin{equation}\label{eq:dishonest_corr}
M_{\vec{x}_{\mathcal{D}}}^{(\mathcal{D})} = \sum_{ J(\vec{a}_{\mathcal{D}}) = 0 } \Pi_{\vec{a}_{\mathcal{D}} | \vec{x}_{\mathcal{D}}}  - \sum_{ J(\vec{a}_{\mathcal{D}}) = 1 } \Pi_{\vec{a}_{\mathcal{D}} | \vec{x}_{\mathcal{D}}}  \, ,
\end{equation}
where $\Pi_{\vec{a}_{\mathcal{D}} | \vec{x}_{\mathcal{D}}}$ is the POVM element associated with the string $\vec{a}_{\mathcal{D}}$ of outcomes of the dishonest parties for input $\vec{x}_{\mathcal{D}}$.

A key ingredient of our result is a Bell inequality that witnesses genuine multipartite nonlocality in the sense that was first introduced by Svetlichny~\cite{Svetlichny3}. Specifically we make use of the family of $N$-partite Svetlichny inequalities~\cite{Svetlichny3,SvetlichnyN} defined by the expression:
\begin{align}\label{eq:defS}
    S_N^{\pm}= \sum_{\vec{x}}(-1)^{\frac{w_{\vec{x}}(w_{\vec{x}}\pm 1)}{2}}\corr{A_{x_1}^{(1)}A_{x_2}^{(2)}\ldots A_{x_N}^{(N)}},
\end{align}
where $\vec{x}=(x_1, x_2,\ldots,x_N)\in \DE{0,1}^{\times N}$ is a string of $N$ bits that labels the parties' inputs, and $w_{\vec{x}}$ is the Hamming weight of string $\vec{x}$. Moreover the correlators are defined as
\begin{align}
\corr{A_{x_1}^{(1)}A_{x_2}^{(2)}\ldots A_{x_N}^{(N)}}=\sum_{J(\vec{a}) = 0}p(\vec{a}| \vec{x}) - \sum_{J(\vec{a}) = 1}p(\vec{a}| \vec{x})
\end{align}
where $\vec{a}=(a_1, a_2,\ldots,a_N)\in \DE{0,1}^{\times N}$ is the string of outcomes, and $J(\vec{a}) =\bigoplus_{j=1}^N a_j$ is the partity of string $\vec{a}$.

The Svetlichny inequalities read
\begin{align} \label{eq:SN}
 |S_N^{\pm}|\stackrel{\mathcal{L}}{\leq} 2^{N-1}\stackrel{\mathcal{Q}}{\leq}2^{N-1}\sqrt{2},
\end{align}
where $\mathcal{L}/\mathcal{Q}$ denotes the classical/quantum bound. The classical bound,  $ |S_N^{\pm}|\leq 2^{N-1}$, constraints all the distributions that can be decomposed into the form
\begin{align}
\label{eq:localmodel}
\begin{split}
 p(a_1a_2&\ldots a_N|x_1x_2\ldots x_N)\\
 &=   \int d\rho_{\lambda}\sum_{\mathcal{P}\subsetneq\DE{1,\ldots, N}}p(\vec{a}_{\mathcal{P}}|\vec{x}_{\mathcal{P}}\lambda)p(\vec{a}_{{\mathcal{P}^c}}|\vec{x}_{{\mathcal{P}^c}}\lambda),
 \end{split}
\end{align}
where $p(\vec{a}_{\mathcal{P}}|\vec{x}_{\mathcal{P}}\lambda)$ is an arbitrary distribution between the parties in set subset $\mathcal{P}$, and $\mathcal{P}^c$ denotes the complementary set.

By grouping the dishonest parties together, with respective observables $M_{\vec{x}_{\mathcal{D}}}^{(\mathcal{D})}$, as defined in \eqref{eq:dishonest_corr}, the Svetlichny expression \eqref{eq:defS} can  be written in terms of $k$-partite correlators, involving the $k-1$ observables of honest parties and a joint observable of the dishonest group. The symmetries of the Svetlichny inequalities ensure that a violation of the $N$-partite Svetlichny inequality in this setting implies a violation of a $k$-partite Svetlichny inequality for the non-communicating honest parties and the group of dishonest parties, as stated in the following proposition. The proof of Proposition~\ref{prop:sk_violation} is presented in Appendix~\ref{app:prop-proofs}. 

\begin{prop}\label{prop:sk_violation}
If a strategy achieves value $s_N$ for the $N$-partite Svetlichny inequality, then  the same strategy  achieves value $s_{k}$ for a $k$-partite Svetlichny inequality, with
\begin{align}
    s_{k}\geq \frac{s_N}{2^{N-k}},
\end{align}
where  $k-1$ parties perform their respective individual strategy, and $N-k+1$ parties are grouped together, potentially performing a joint strategy, with their joint outcome defined by $a'=J(\vec{a}_{\mathcal{D}})$.
\end{prop}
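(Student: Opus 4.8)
The plan is to exploit the recursive structure of the Svetlichny polynomial under grouping of parties; I describe it for $S_N^-$, the case of $S_N^+$ being entirely analogous (with a few extra global signs that get absorbed as below). By the permutation symmetry of the Svetlichny expression we may relabel so that $\mathcal{H}=\{1,\dots,k-1\}$ and $\mathcal{D}=\{k,\dots,N\}$, and write each input string as a concatenation $\vec{x}=\vec{x}_{\mathcal{H}}\vec{x}_{\mathcal{D}}$, so $w_{\vec{x}}=w_{\vec{x}_{\mathcal{H}}}+w_{\vec{x}_{\mathcal{D}}}$. Using the grouped observable $M^{(\mathcal{D})}_{\vec{x}_{\mathcal{D}}}$ of Eq.~\eqref{eq:dishonest_corr}, the total outcome parity splits, $J(\vec{a})=J(\vec{a}_{\mathcal{H}})\oplus J(\vec{a}_{\mathcal{D}})$, so every $N$-partite correlator collapses to a $k$-partite one, $\corr{A^{(1)}_{x_1}\cdots A^{(N)}_{x_N}}=\corr{A^{(1)}_{x_1}\cdots A^{(k-1)}_{x_{k-1}}M^{(\mathcal{D})}_{\vec{x}_{\mathcal{D}}}}$.

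The key step is a parity bookkeeping on the coefficient $(-1)^{w_{\vec{x}}(w_{\vec{x}}-1)/2}=(-1)^{\binom{w_{\vec{x}}}{2}}$. Applying $\binom{u+v}{2}\equiv\binom{u}{2}+\binom{v}{2}+uv\pmod 2$ with $u=w_{\vec{x}_{\mathcal{H}}}$, $v=w_{\vec{x}_{\mathcal{D}}}$, the coefficient factorizes as $(-1)^{\binom{v}{2}}(-1)^{\binom{u}{2}}(-1)^{uv}$; the cross term $(-1)^{uv}$ equals $1$ when $v$ is even and $(-1)^{u}$ when $v$ is odd, i.e. it turns $(-1)^{\binom{u}{2}}$ into the coefficient of the $(k-1)$-partite $S_{k-1}^+$ on the honest parties. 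Summing over $\vec{x}_{\mathcal{H}}$ at fixed $\vec{x}_{\mathcal{D}}$ therefore yields, viewing $S_{k-1}^{\pm}$ as the corresponding operator on the honest subsystems,
\begin{align*}
S_N^- = \sum_{\substack{\vec{x}_{\mathcal{D}}\\ w_{\vec{x}_{\mathcal{D}}}\text{ even}}}\eta_{\vec{x}_{\mathcal{D}}}\corr{S_{k-1}^{-}\,M^{(\mathcal{D})}_{\vec{x}_{\mathcal{D}}}}+\sum_{\substack{\vec{x}_{\mathcal{D}}\\ w_{\vec{x}_{\mathcal{D}}}\text{ odd}}}\eta_{\vec{x}_{\mathcal{D}}}\corr{S_{k-1}^{+}\,M^{(\mathcal{D})}_{\vec{x}_{\mathcal{D}}}},
\end{align*}
where $\eta_{\vec{x}_{\mathcal{D}}}=(-1)^{\binom{w_{\vec{x}_{\mathcal{D}}}}{2}}\in\{\pm1\}$.

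I then invoke the companion identity obtained by splitting a single party off the $k$-partite polynomial, namely $S_k^-=\corr{S_{k-1}^-M_0}+\corr{S_{k-1}^+M_1}$, valid for any two $\pm1$ observables $M_0,M_1$ of the $k$-th (grouped) party — this follows from the same $\binom{\cdot}{2}$ computation with a length-one tail. Pairing each even-weight string $\vec{z}$ with the odd-weight string $\vec{z}\oplus e_1$ (flip the first bit of $\vec{x}_{\mathcal{D}}$: an involution, hence a bijection between the $2^{m-1}$ even and $2^{m-1}$ odd strings, $m=|\mathcal{D}|=N-k+1$), the contribution of such a pair to the displayed sum is exactly $\corr{S_{k-1}^-(\eta_{\vec{z}}M^{(\mathcal{D})}_{\vec{z}})}+\corr{S_{k-1}^+(\eta_{\vec{z}\oplus e_1}M^{(\mathcal{D})}_{\vec{z}\oplus e_1})}$, i.e. a genuine $k$-partite Svetlichny expression for the $k-1$ individual honest parties together with the grouped party — the signs $\eta$ being harmless because $-M$ is again a legitimate $\pm1$ observable (the group merely flips its reported parity bit, a strategy it is free to use). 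This gives $S_N^-=\sum_{j=1}^{2^{N-k}}S_k^{-,(j)}$, a sum of $2^{N-k}$ valid $k$-partite Svetlichny values all realized by the very same strategy (the honest measurements are unchanged, and for the $j$-th expression the group's two macro-inputs are the dishonest inputs $\vec{z}_j$ and $\vec{z}_j\oplus e_1$). The triangle inequality then produces some $j^\star$ with $|S_k^{-,(j^\star)}|\ge|S_N^-|/2^{N-k}$, which is the claim with $s_N=|S_N^-|$ and $s_k=|S_k^{-,(j^\star)}|$.

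I expect the parity bookkeeping of the first two steps to be the delicate part: one has to keep careful track of how the cross term $(-1)^{w_{\vec{x}_{\mathcal{H}}}w_{\vec{x}_{\mathcal{D}}}}$ toggles the honest-side polynomial between $S_{k-1}^+$ and $S_{k-1}^-$, and one must verify that the pairing genuinely yields $2^{N-k}$ $k$-partite Svetlichny expressions sharing the same honest-party observables, so that the constant $2^{N-k}$ — and nothing smaller — is what enters the final bound. The remaining ingredients (the correlator factorization and the single-party splitting identity) are routine once the coefficient identity is in hand.
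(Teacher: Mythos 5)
Your proof is correct and follows essentially the same route as the paper's: both hinge on the identity $\binom{u+v}{2}=\binom{u}{2}+\binom{v}{2}+uv$ to decompose $S_N^{\pm}$ into $2^{N-k}$ $k$-partite Svetlichny expressions that share the honest parties' observables and differ only in the pair of grouped observables $M^{(\mathcal{D})}_{0,\vec{x}_{N-k}}, M^{(\mathcal{D})}_{1,\vec{x}_{N-k}}$, followed by the triangle inequality. The only cosmetic difference is that you split off all $N-k+1$ dishonest inputs first and then recombine even- and odd-weight strings in pairs via $\vec{z}\mapsto\vec{z}\oplus e_1$ (absorbing the signs $\eta$ into the $\pm1$ observables), whereas the paper splits directly into a $k$-bit and an $(N-k)$-bit block; the resulting pairing of grouped observables is identical.
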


\section{Self-testing with dishonest parties}

Proposition~\ref{prop:sk_violation} shows that a violation of the $N$-partite Svetlichny inequality witnesses the violation of a $k$-partite Svetlichny inequality by the honest parties and the set of dishonest parties.
Now we prove that a strong characterization of the distributed state and measurements can be achieved when the maximal violation of the Svetlichny inequality is observed.
For that, let us introduce a notion of self-testing that is suitable to the scenario with dishonest parties. To make the following expressions more concise, we take $\mathcal{D}=\DE{k,k+1\ldots,N}$, which can always be obtained by relabeling the parties.
\begin{defin}\label{definition}
A set of observed correlations $p(\vec{a}|\vec{x})$ self-tests the $k$-partite state $\ket{\Phi}$ in a dishonest parties scenario if, for any state $\rho$ with purification $\ket{\psi}$ compatible with $p(\vec{a}|\vec{x})$ for some measurements described by observables $A_{x_{1}}^{(1)}, \ldots, A_{x_{k-1}}^{(k-1)},  M_{\vec{x}_\mathcal{D}}^{(\mathcal{D})}$, there exist local isometries  $\lbrace \Lambda_i \rbrace_{i = 1}^{k-1}$ for the honest parties and a global isometry $ \Lambda_{\mathcal{D}}$ for the dishonest parties such that
\begin{align}\label{stsub}
   \Lambda_1 \otimes \ldots \otimes
   \Lambda_{k-1} \otimes \Lambda_{\mathcal{D}} (\ket{\psi}) = \ket{\Phi} \otimes \ket{\zeta}
\end{align}
where $\ket{\zeta}$ denotes some uncorrelated degrees of freedom. Additionally, the same correlation also self-tests a set of target measurements ${\bar{A}^{(i)}_0, \bar{A}^{(i)}_1}$, for $i = 1,\ldots,k-1$, and $\bar{A}_{\vec{x}_\mathcal{D}}^{(k)}$, for $\vec{x}_\mathcal{D} \in \DE{0,1}^{\times |\mathcal{D}|}$,   if it follows that, for all input choices,
\begin{align}\label{opsub}
   \Lambda_1 \otimes \ldots &\otimes
   \Lambda_{k-1} \otimes \Lambda_{\mathcal{D}} ( A_{x_1}^{(1)} \otimes \ldots \otimes
   A_{x_{k-1}}^{(k-1)} \otimes M_{\vec{x}_\mathcal{D}}^{(\mathcal{D})} \ket{\psi}) \nonumber \\
   &= ( \bar{A}_{x_1}^{(1)} \otimes \ldots \otimes
   \bar{A}_{x_{k-1}}^{(k-1)} \otimes \bar{A}_{\vec{x}_\mathcal{D}}^{(k)} \ket{\Phi})\otimes \ket{\zeta}.
\end{align}
\end{defin}
Note that the self-testing statement accounts for collective operations involving all the dishonest parties, since one cannot a priori exclude that all the $N-k+1$ parties are acting maliciously and cooperating as a joint effective party. For that reason, the best one can hope for is to self-test an entangled state shared between the honest parties and a single additional party, representing the dishonest ones as a collective. In other words, the self-tested state  $\ket{\psi}$ belongs to a $k$-partite Hilbert space.
With that in mind, we are able to show what follows.
\begin{theorem}\label{thm:self-test}
The maximum violation of the $N$-partite Svetlichny inequality with a set of dishonest parties $\mathcal{D}$, $|\mathcal{D}|=N-k+1$, self-tests that a $k$-partite GHZ state is shared by the honest parties and the set of dishonest parties. Moreover, the same correlations also self-test a set of Pauli observables for the measurements performed by the honest parties and the joint measurements of the dishonest parties. 
\end{theorem}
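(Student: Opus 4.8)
The plan is to reduce the $N$-partite statement to a $k$-partite one using Proposition~\ref{prop:sk_violation}, and then establish a self-testing statement for the maximal violation of the $k$-partite Svetlichny inequality.

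\textbf{Reduction to $k$ parties.} Treat the dishonest set $\mathcal{D}$ as a single effective party with observables $M^{(\mathcal{D})}_{\vec{x}_\mathcal{D}}$ and coarse-grained outcome $a'=J(\vec{a}_\mathcal{D})$. The maximal quantum value of $S_N^\pm$ is $2^{N-1}\sqrt2$, so Proposition~\ref{prop:sk_violation} forces the effective strategy --- the $k-1$ honest observables $A^{(i)}_{x_i}$ together with the effective observable of $\mathcal{D}$ --- to attain $s_k \ge 2^{N-1}\sqrt2/2^{N-k} = 2^{k-1}\sqrt2$, which is exactly the Tsirelson bound of the $k$-partite Svetlichny inequality in \eqref{eq:SN}. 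Hence the effective $k$-partite strategy maximally violates a $k$-partite Svetlichny inequality, and it suffices to show that this maximal violation self-tests the $k$-partite GHZ state and Pauli measurements in the sense of Definition~\ref{definition}. The global isometry $\Lambda_{\mathcal{D}}$ allowed there is precisely the isometry attached to the effective dishonest party, so no structure internal to $\mathcal{D}$ is required.

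\textbf{Self-testing the $k$-partite Svetlichny bound.} I would use a sum-of-squares certificate $2^{k-1}\sqrt2\,\id - S_k^\pm = \sum_\alpha \mu_\alpha L_\alpha^\dagger L_\alpha$ with $\mu_\alpha>0$ and $L_\alpha$ of degree two in the observables, built recursively from the CHSH certificate via the way $S_k^\pm$ decomposes into $(k-1)$-party Svetlichny polynomials multiplied by observables of the $k$-th party. Saturation forces $L_\alpha\ket{\psi}=0$, from which one extracts: (i) for each effective party $j$, with $Z_j=(A^{(j)}_0+A^{(j)}_1)/\sqrt2$ and $X_j=(A^{(j)}_0-A^{(j)}_1)/\sqrt2$ (and analogously, from $M^{(\mathcal{D})}_{\vec{x}_\mathcal{D}}$, for the dishonest party), the operators $Z_j,X_j$ act on $\ket{\psi}$ as anticommuting involutions; (ii) GHZ-stabilizer correlation identities $Z_iZ_j\ket{\psi}=\ket{\psi}$ for all pairs and the weight-$k$ relation $X_1X_2\cdots X_k\ket{\psi}=\ket{\psi}$. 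Applying Jordan's lemma to each pair $(Z_j,X_j)$ reduces it to a direct sum of qubit blocks, and the standard swap isometry built from $Z_j,X_j$ then maps these relations block by block onto the $k$-qubit GHZ stabilizer group, giving $\Lambda_1\otimes\cdots\otimes\Lambda_{k-1}\otimes\Lambda_{\mathcal{D}}(\ket{\psi})=\ket{\Phi}\otimes\ket{\zeta}$ with $\ket{\Phi}$ the GHZ state, and, by propagating the same identities, the Pauli form of the measurements as in \eqref{opsub}.

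\textbf{Remaining settings and main obstacle.} Definition~\ref{definition} also asks to self-test $\bar{A}^{(k)}_{\vec{x}_\mathcal{D}}$ for every $\vec{x}_\mathcal{D}$, whereas the $k$-partite Svetlichny inequality only pins down two effective settings of the grouped party; the remaining ones all occur in the original $N$-partite expression $S_N^\pm$, so I would feed the already-characterised honest observables back into the chain of correlator identities saturating $S_N^\pm$ to read off each $M^{(\mathcal{D})}_{\vec{x}_\mathcal{D}}$ on the GHZ support. The main obstacle is step (ii): the quadratic SOS relations naturally yield a qubit and maximal pairwise correlations for each party, but for $k\ge4$ that data alone is also consistent with a product of Bell pairs, so one must check that the Svetlichny certificate is strong enough to deliver the genuinely $k$-partite stabilizer $X_1\cdots X_k$ tying all parties together --- and that the recursive construction of the certificate preserves this. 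The rest is bookkeeping: verifying all identities hold on the full purified state including the junk register $\ket{\zeta}$, and that a single isometry $\Lambda_{\mathcal{D}}$, as permitted by Definition~\ref{definition}, suffices for the dishonest group.
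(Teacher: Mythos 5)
Your overall plan shares the key ingredients with the paper's proof (a CHSH-based sum-of-squares certificate, anticommuting $X,Z$ pairs per effective party, and a SWAP isometry with a single global branch for $\mathcal{D}$), but it is organized differently, and the difference matters at exactly the point you flag as your ``main obstacle.'' The paper does \emph{not} first coarse-grain to a $k$-partite Svetlichny inequality and then self-test that; it writes an SOS certificate directly for the $N$-partite shifted operator $2^{N-1}\sqrt{2}\,\mathds{1}-\hat S_N^+$, obtained by decomposing $S_N^+$ into $2^{N-2}$ CHSH-type blocks $S^{\pm}_{2,\vec x_{N-2}}$ in which the inputs of parties $3,\ldots,N$ are fixed, see \eqref{eq:SOS_SNdishonest}. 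Saturation yields one algebraic relation per input string $\vec x_{N-2}$, and after the substitutions \eqref{app:SNsubs} a suitable selection of these relations reproduces the $k$ generators $\hat S_i=\hat X_i\bigotimes_{j\neq i}\hat Z_j$ of the fully connected graph state on $k$ vertices, which is local-unitary equivalent to the GHZ state. Each generator has weight $k$, so the worry that the certificate might only pin down pairwise Bell-pair correlations never arises: the genuinely $k$-partite relations are read off directly from the SOS saturation conditions rather than having to be assembled afterwards. This is the missing idea in your step (ii): you correctly identify that pairwise relations plus local qubit structure are insufficient for $k\ge 4$, but you leave open how the weight-$k$ relation is actually derived. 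The answer is that it is one of the saturation conditions of the $N$-partite (not $k$-partite) certificate for an appropriate choice of the fixed input string, and the intermediate target is the fully connected graph-state stabilizer group rather than the GHZ stabilizers $Z_iZ_j$, $X_1\cdots X_k$ you posit (the two are local-Clifford equivalent, but the former is what the certificate delivers).

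The same organizational choice also disposes of your ``remaining settings'' issue. Because the SOS is written for the full $N$-partite expression, the saturation conditions \eqref{app:SOSodd1}--\eqref{app:SOSeven2} hold for every dishonest input string $\vec x_{N-k}$ simultaneously, and the substitutions \eqref{app:SNsubs} assign to each string its own pair of effective operators with string-dependent signs; all $2^{N-k+1}$ observables $M^{(\mathcal D)}_{\vec x_{\mathcal D}}$ are therefore characterized in one pass, as in \eqref{app:SNmeas}, with no separate feed-back argument. Your reduction step itself is sound --- maximality of $s_N$ together with Proposition~\ref{prop:sk_violation} and the triangle inequality does force every sub-inequality $S^{\pm}_{k,\vec x_{N-k}}$ to be maximally violated --- but as written your proof defers its two hardest steps (deriving the weight-$k$ stabilizer and characterizing all dishonest settings) to arguments you only sketch. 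Both are closed by working with the $N$-partite certificate from the start.
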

The proof of Thm.~\ref{thm:self-test} and a detailed description of the self-tested measurements are presented in Appendix~\ref{app:net-selftest}. 

Thm.~\ref{thm:self-test} is a stronger form of self-testing statement that allows us to infer the existence of a specific shared state even in the presence of dishonest parties.
Indeed, when the dishonest parties implement a joint measurement that leads to the maximal violation of the $N$-partite Svetlichny inequality, the resulting statistics will maximally violate a $k$-partite Svetlichny inequality, where now we have the standard Bell scenario with the $k-1$ honest parties and the group of dishonest parties performing local measurements.
As a collorary of Thm.~\ref{thm:self-test}, when $|\mathcal{D}|\leq 1$ we obtain a standard self-testing result for the  Svetlichny inequalities. 
\begin{corollary}\label{prop:self-test}
The maximal violation of an $N$-partite Svetilichny inequality, in the standard Bell scenario, self-tests the $N$-partite GHZ state, and the respective Pauli observables that lead to maximal violation of $S_N^{\pm}$.
\end{corollary}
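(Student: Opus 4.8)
The plan is to obtain Corollary~\ref{prop:self-test} as the degenerate instance of Theorem~\ref{thm:self-test} in which at most one party is dishonest, so that the ``group of dishonest parties'' is a single party and the ordinary Bell scenario with $N$ spatially separated parties is recovered. Concretely, I would set $|\mathcal{D}|=1$, hence $k=N$ and $|\mathcal{H}|=N-1$, and observe that with $\mathcal{D}=\{N\}$ the joint observable $M_{\vec{x}_{\mathcal{D}}}^{(\mathcal{D})}$ of \eqref{eq:dishonest_corr} collapses to the ordinary binary observable $A_{x_N}^{(N)}=\Pi_{0|x_N}-\Pi_{1|x_N}$ (the parity function $J$ acting on a single bit), while the global isometry $\Lambda_{\mathcal{D}}$ appearing in Definition~\ref{definition} is nothing but a local isometry on party $N$'s Hilbert space. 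Under these identifications, conditions \eqref{stsub} and \eqref{opsub} become verbatim the usual self-testing statement for $N$ separated parties: there exist local isometries $\Lambda_1\otimes\cdots\otimes\Lambda_N$ mapping any state $\ket{\psi}$ compatible with the correlations to $\ket{\mathrm{GHZ}_N}\otimes\ket{\zeta}$ and mapping the honest and ``dishonest'' measurement operators to the corresponding ideal Pauli operators.

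With this reduction in hand the proof is immediate: invoking Theorem~\ref{thm:self-test} with $|\mathcal{D}|=1$ yields (with $k=N$) an $N$-partite GHZ state shared among the $N-1$ honest parties and the single remaining party, together with Pauli observables for all of them, which---after the identifications above---is precisely the claim that the maximal violation $|S_N^{\pm}|=2^{N-1}\sqrt{2}$ of \eqref{eq:SN} self-tests the $N$-partite GHZ state and the Pauli observables attaining that value.

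Should a fully self-contained argument be preferred, I would instead run the proof of Theorem~\ref{thm:self-test} with trivial grouping: Proposition~\ref{prop:sk_violation} is applied with $k=N$ (so $s_k=s_N$ and no parties are merged), and then one carries out the standard swap-isometry self-test of the GHZ state---extracting from a sum-of-squares decomposition of $2^{N-1}\sqrt{2}\,\id-S_N^{\pm}$ the relations forcing $\{A_0^{(i)},A_1^{(i)}\}$ to annihilate the support of the local state for every $i$, and using these anticommuting pairs of unitary $\pm1$-observables to build the local isometries in the usual way. I expect no genuine difficulty here; the only point deserving care---and hence the main, rather modest, obstacle---is to verify that Definition~\ref{definition} really does specialize to the ordinary self-testing definition when $|\mathcal{D}|\leq 1$, i.e.\ that $M^{(\mathcal{D})}_{\vec{x}_{\mathcal{D}}}$ is then a legitimate $\pm1$-valued observable and that the admissible $\Lambda_{\mathcal{D}}$ are exactly the single-site local isometries, so that no generality is lost in treating a lone party as a trivial ``group.''
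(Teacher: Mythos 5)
Your proposal is correct and follows exactly the paper's route: the authors obtain the corollary by specializing Theorem~\ref{thm:self-test} to $k=N$ (at most one dishonest party), noting that the grouped observable $M^{(\mathcal{D})}_{\vec{x}_{\mathcal{D}}}$ and the global isometry $\Lambda_{\mathcal{D}}$ then reduce to an ordinary single-party observable and a local isometry, so Definition~\ref{definition} collapses to the standard self-testing definition. Your check that this reduction is legitimate is precisely the (brief) argument given at the end of Appendix~\ref{app:net-selftest}.
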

 Note that even though the family of MABK Bell inequalities~\cite{MABK1,MABK2,MABK3} can be used to self-test the GHZ state in the standard Bell scenario, it fails to provide a self-testing statement in the presence of dishonest parties.
Indeed, the strong form of genuine multipartite nonlocality witnessed by the Svetlichny inequality seems to be a crucial ingredient for self-testing in the presence of dishonest parties. 
It is worth mentioning that different definitions of genuine multipartite nonlocality have been introduced~\cite{Bancal13,Gallego12}, where the decomposition in \eqref{eq:localmodel} is restricted to non-signalling or time-ordered distributions with one-way signalling. In particular, in~\cite{Gallego12} it is shown that Svetlichny's original definition of genuine multipartite nonlocality is inconsistent with a general operational framework for nonlocality.
Nevertheless, Svetlichny's strong definition of genuine multipartite nonlocality is appropriate in our scenario because we consider a setup where the dishonest parties may collude and perform a joint strategy, which is fairly captured by  a  signalling probability distribution. The strong form of nonlocality witnessed by the Svetlichny inequality was also shown to have potential application for device-independent secret sharing~\cite{Moreno2020,secretsharingNQKD}.

\section{Genuine multipartite entanglement certification in quantum networks}

Using the previous results, we
 now introduce a protocol for device-independent entanglement certification in a network with dishonest parties. 
\vspace{0.5em}

\begin{protocol}\label{prot}
	\caption{}
		One of the parties, denoted the Verifier, will proceed to certify the state generated by the source. W.l.o.g. we can assume the Verifier to be party $A_1$.
		\begin{enumerate}[label=\arabic*., ref=\arabic*]
		\item Repeat several times:
		\begin{enumerate}[label*=\arabic*.]
		\item\label{prot:step1} For each $i\in\DE{1,\ldots,N}$, the Verifier selects a random input $x_i\in\DE{0,1}$. The Verifier keeps their corresponding input $x_1$ and sends $x_i$ to party $A_i$ using a private channel.			\item\label{prot:step2} Upon receiving input $x_i$, party $A_i$ produces  output $a_i$ and sends it  to the Verifier using  a private channel.
	\end{enumerate}
			\item\label{prot:step4} The Verifier computes the value $s_N$ for the Svetlichny inequality, $S_N^+$, from the observed distribution of inputs and outputs. 
		\end{enumerate}
	\end{protocol}\vspace{1em}

Our first result is a qualitative statement about the entanglement properties of the distributed state.

\begin{theorem}\label{thm:gme} If an honest Verifier observes a violation of the Svetlichny ienquality,  $s_N>2^{N-1}$, then Protocol~\ref{prot} certifies genuine multipartite entanglement between the  honest parties and the set of unknown dishonest parties.
\end{theorem}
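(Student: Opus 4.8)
The plan is to reduce the claim to the $k$-partite Svetlichny inequality through Proposition~\ref{prop:sk_violation}, and then exploit that the $k$-partite Svetlichny bound $2^{k-1}$ holds for every correlation of the form~\eqref{eq:localmodel} --- in particular for all correlations produced by a biseparable $k$-partite state under local measurements. First I would set up the effective $k$-partite picture: in the IID setting the data collected in Protocol~\ref{prot} reproduces the true distribution $p(\vec a|\vec x)$ obtained when the honest parties measure the observables $A^{(i)}_{x_i}$ on their shares of the state supplied by the (dishonest-controlled) source, while the dishonest parties apply some joint operation on their joint share and announce the parity $a'=J(\vec a_{\mathcal D})$; grouping them into a single party with observable $M^{(\mathcal D)}_{\vec x_{\mathcal D}}$ as in~\eqref{eq:dishonest_corr} turns this into a genuine $k$-partite quantum strategy acting on a $k$-partite state $\rho$ shared by the $k-1$ honest parties and the dishonest block (party $k$). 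No assumption on the dishonest parties' devices, nor even on whether they hold any quantum system, is needed: any such behaviour is subsumed by some choice of $\rho$ and some POVM for party $k$.

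Next I would invoke Proposition~\ref{prop:sk_violation} with exactly this grouping. Since the honest Verifier observes $s_N>2^{N-1}$, the effective $k$-partite strategy attains a Svetlichny value $s_k\ge s_N/2^{N-k}>2^{N-1}/2^{N-k}=2^{k-1}$, i.e.\ it violates a $k$-partite Svetlichny inequality. The core step is then the contrapositive: suppose $\rho$ were \emph{not} genuinely multipartite entangled across the $k$ effective parties, so that $\rho=\sum_{\mathcal P\subsetneq\{1,\ldots,k\}}q_{\mathcal P}\,\rho_{\mathcal P}$ with each $\rho_{\mathcal P}$ separable across the cut $\mathcal P|\mathcal P^c$. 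The honest parties' measurements are local, and --- crucially --- the dishonest block's measurement $M^{(\mathcal D)}_{\vec x_{\mathcal D}}$ is local with respect to the single party $k$, so each $\rho_{\mathcal P}$ yields statistics that factorise across $\mathcal P|\mathcal P^c$ conditioned on the hidden variable of the separable decomposition. Averaging over $\mathcal P$ and that hidden variable produces precisely a decomposition of the form~\eqref{eq:localmodel} for $k$ parties, hence $|s_k|\le 2^{k-1}$ --- contradicting the previous bound. Therefore $\rho$ must be genuinely multipartite entangled between the honest parties and the dishonest block, which is the asserted certification (this being the strongest GME statement available here, consistently with the remarks following Definition~\ref{definition}).

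I expect the only delicate point --- more a matter of careful bookkeeping than a real obstacle --- to be justifying that the dishonest parties' freedom to communicate and run arbitrary, even signalling, joint strategies is rendered harmless once they are merged into one effective party, so that the resulting $k$-partite correlation genuinely lies within the scope of the Svetlichny bound~\eqref{eq:SN}; and, relatedly, making explicit that a biseparable $k$-partite $\rho$ cannot do better than $2^{k-1}$ even though within the dishonest block arbitrary correlations are allowed, because those correlations live entirely inside a single party of the bipartition.
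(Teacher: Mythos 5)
Your proposal is correct and follows essentially the same route as the paper: apply Proposition~\ref{prop:sk_violation} to obtain a $k$-partite Svetlichny violation $s_k > 2^{k-1}$ for the honest parties and the grouped dishonest block, then observe that any biseparable $k$-partite state yields correlations of the form~\eqref{eq:localmodel} and hence cannot exceed the bound $2^{k-1}$. The paper states this in two sentences; your version merely spells out the contrapositive and the (valid) observation that the dishonest block's internal signalling is harmless once it sits inside a single party of the bipartition.
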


\begin{proof}[Proof of Theorem~\ref{thm:gme}] By  Proposition~\ref{prop:sk_violation}, the violation of the $N$-partite Svetlichny inequality implies that the $k-1$ honest parties and the set of dishonest parties also violate a $k$-partite Svetlichny inequality in the standard $k$-partite Bell scenario. Moreover, the definition of the local bound for the Svetlichny inequality, eq.~\eqref{eq:localmodel}, implies that a violation of a $k$-partite inequality in the standard Bell scenario witnesses $k$-partite entanglement.
\end{proof}

The self-testing results provided in the previous section allow us to go beyond a qualitative detection of genuine multipartite entanglement, and instead we can make a statement about the shape of the shared state. We recall that in the scenario with dishonest parties the best one can do is to certify the state up to a joint operation on the dishonest parties' systems (see~\cite{Pappa12}).

\begin{theorem}\label{thm:ghz}
If the Verifier is honest and the maximal violation of the Svetlichny inequality is observed. Protocol~\ref{prot} certifies the $N$-partite GHZ state up to local isometries on the honest parties and a global isometry on the dishonest parties. 
\end{theorem}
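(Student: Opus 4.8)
The plan is to deduce Theorem~\ref{thm:ghz} from the self-testing statement of Theorem~\ref{thm:self-test}, the only genuinely new ingredient being the observation that seeing the maximal value of $S_N^+$ in Protocol~\ref{prot} forces the \emph{effective} $k$-partite correlations — obtained by keeping the $k-1$ honest observables individual and grouping the dishonest parties with joint outcome $a'=J(\vec a_{\mathcal D})$, i.e.\ replacing their block by $M^{(\mathcal D)}_{\vec x_{\mathcal D}}$ — to sit exactly at the quantum maximum of the $k$-partite Svetlichny inequality.

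First I would unpack the hypothesis. Since the Verifier is honest and accumulates the full input/output statistics, ``maximal violation observed'' means the computed value is $s_N = 2^{N-1}\sqrt 2$, the Tsirelson-type bound in \eqref{eq:SN}. Applying Proposition~\ref{prop:sk_violation} with this $\mathcal D$, the grouped strategy achieves a $k$-partite Svetlichny value $s_k \ge s_N/2^{N-k} = 2^{N-1}\sqrt 2 / 2^{N-k} = 2^{k-1}\sqrt 2$. But \eqref{eq:SN} with $N\mapsto k$ also gives $s_k \le 2^{k-1}\sqrt 2$, so both bounds are saturated: the $k-1$ honest parties together with the dishonest block maximally violate the $k$-partite Svetlichny inequality in the ordinary $k$-party Bell scenario.

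Next I would invoke Theorem~\ref{thm:self-test} directly. The correlations produced at this maximal value self-test, in the dishonest-parties sense of Definition~\ref{definition}, the $k$-partite GHZ state (and Pauli observables). Hence for any state $\rho$ with purification $\ket\psi$ compatible with the observed statistics, there are local isometries $\Lambda_1,\ldots,\Lambda_{k-1}$ on the honest systems and a single global isometry $\Lambda_{\mathcal D}$ on the joint system of the dishonest parties with $\Lambda_1\otimes\cdots\otimes\Lambda_{k-1}\otimes\Lambda_{\mathcal D}(\ket\psi)=\ket{\mathrm{GHZ}_k}\otimes\ket\zeta$. Reading $\ket{\mathrm{GHZ}_k}$ as the $N$-partite GHZ state in which the $N-k+1$ dishonest shares are collected into one register — the two differ only by the embedding $\ket b\mapsto\ket b^{\otimes(N-k+1)}$, which is a global isometry on the dishonest side — this is precisely the asserted certification. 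I would also recall, as in~\cite{Pappa12}, that nothing finer is achievable, since the dishonest parties can always spread their single effective qubit over their systems by a local-to-them isometry without altering any statistics.

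The only real work, and the step I expect to be the main obstacle, is checking that the coarse-grained $k$-partite correlations literally satisfy the hypotheses of Definition~\ref{definition} needed to call Theorem~\ref{thm:self-test}: the chained inequalities pin the Svetlichny value, but one must also argue — appealing to the proof of Theorem~\ref{thm:self-test} in Appendix~\ref{app:net-selftest} — that maximal violation already fixes the remaining entries of the effective correlation table, that each $M^{(\mathcal D)}_{\vec x_{\mathcal D}}$ is a bona fide $\pm1$ observable, and that the relevant marginals are the ones assumed there. Once this bookkeeping is in place the conclusion follows by citing Theorem~\ref{thm:self-test}, with no further computation.
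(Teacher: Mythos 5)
Your proposal is correct and follows essentially the same route as the paper: invoke Theorem~\ref{thm:self-test} to certify the $k$-partite GHZ state up to the isometries of Definition~\ref{definition}, then absorb the embedding $\ket{b}\mapsto\ket{b}^{\otimes(N-k+1)}$ into the global isometry on the dishonest side (the paper realizes it explicitly as ${\rm CNOT}^{N-k}_{\mathcal D}$ on ancillas) to promote $\ket{\Phi^k}$ to $\ket{\Phi^N}$. The preliminary detour through Proposition~\ref{prop:sk_violation} and the final worry about re-verifying the hypotheses of Definition~\ref{definition} are unnecessary, since Theorem~\ref{thm:self-test} is already stated directly in terms of the maximal violation of the $N$-partite inequality with the dishonest set $\mathcal D$.
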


\begin{proof}
By Theorem~\ref{thm:self-test} and the definition of self-testing with dishonest parties, the maximal violation of the Svetlichny inequality implies that for any state $\rho$ with purification $\ket{\psi}$ shared by the parties, it holds that
\begin{align}
    \Lambda_1 \otimes \ldots \otimes
   \Lambda_{k-1} \otimes \Lambda_{\mathcal{D}} (\ket{\psi}) = \ket{\Phi^k} \otimes \ket{\zeta}
\end{align}
for some isometries $ \Lambda_1  \ldots,
   \Lambda_{k-1},\Lambda_{\mathcal{D}}$, where  $\ket{\Phi^k}=\frac{1}{\sqrt{2}}(\ket{0 \ldots 0}+\ket{1\ldots 1})$ is the $k$-partite GHZ state.
Now, using the freedom of operations on the dishonest set, an additional CNOT operation on the set of dishonest parties brings the state to the target $N$-partite state
\begin{align}
\begin{split}
    \Lambda_1 \otimes & \ldots \otimes
   \Lambda_{k-1} \otimes \Lambda'_{\mathcal{D}} (\ket{\psi}) \\
   &= \Lambda_1 \otimes \ldots \otimes
   \Lambda_{k-1} \otimes {\rm CNOT}^{N-k}_{\mathcal{D}}\circ \Lambda_{\mathcal{D}} (\ket{\psi}) \\
   &= {\rm CNOT}^{N-k}_{\mathcal{D}}(\ket{\Phi^k} \otimes \ket{\zeta})=\ket{\Phi^N} \otimes \ket{\zeta}
   \end{split}
\end{align}
where ${\rm CNOT}^{N-k}_{\mathcal{D}}$ corresponds to the CNOT gate applied to $N-k$ ancillas on the state $\ket{0}$ with control on system $\mathcal{D}$, and $\ket{\Phi^N}$ is the GHZ state of $N$ parties.
\end{proof}

Thm.~\ref{thm:ghz} resembles the certification guarantees as first defined in~\cite{Pappa12}. The difference here is that, in the device-independent scenario, the  state is certified up to local isometries in the honest parties.

The strong self-testing properties of the Svetlichny inequalities can also provide robust guarantees about the distributed state.
We will now derive bounds on the device-independent fidelity of the shared state as a function of the observed violation.
For that,  we define the following figure of merit for the network scenario
\begin{align}\label{eq:F}
    F_{\rm DI}^{\mathcal{D}}(s_N)=\inf_{\tilde{\rho}\in \mathcal{S}({s}_N^\mathcal{D})} \max_{\substack{\Lambda_{\mathcal{D}},\Lambda_{P_i}\\P_i\in\mathcal{H}}}F(\otimes_{P_i\in \mathcal{H}}\Lambda_{P_i}\otimes\Lambda_{\mathcal{D}}(\tilde{\rho}),\Phi^N),
\end{align}
where $\Phi^N=\ketbra{\Phi^N}{\Phi^N}$, $\Lambda_{P_i}$ denote local channels on the system of the honest parties, $P_i \in \mathcal{H}$, and $\Lambda_{\mathcal{D}}$ is a joint quantum channel on the systems of the dishonest parties $\mathcal{D}$, and $\mathcal{S}({s}_N^\mathcal{D})$ is the set of all quantum states that achieves value at least $s_N$ for an $N$-partite Svetlichny inequality when the parties in $\mathcal{D}$ can apply a joint strategy. The fidelity is defined as $
    F(\rho, \sigma)=\de{\Tr |\sqrt{\rho}\sqrt{\sigma} | }^2$.
    
Eq.~\eqref{eq:F} generalizes the concept of \emph{extractability} introduced in~\cite{JedSelftest} to the dishonest parties' scenario. 
Bounds for the fidelity as a function of the Bell violation in the standard Bell scenario can be derived using the self-testing
from operator inequalities (STOPI) method introduced in~\cite{JedSelftest} (see also~\cite{Tim-st}). The STOPI method  generates bounds of the following type for a $k$-partite Svetlichny inequality
\begin{align}\label{F_sk}
    F_{DI}(s_k)\geq f_k s_k - \mu_k
\end{align}
where $F_{DI}(s_k)$ correspond to eq.~\eqref{eq:F} for $|\mathcal{D}|\leq 1$, and $f_k$ and $\mu_k$ are constants that depend on the Svetlichny inquality. Note that  we drop the superscript $\mathcal{D}$ to highlight that eq.~\eqref{F_sk} refers to the standard Bell scenario with $k$ non-collaborating parties.

In the following theorem, we show that bounds for the standard Bell scenario, \eqref{F_sk}, can be used to bound our quantity of interest,  $F_{\rm DI}^{\mathcal{D}}(s_N)$.
\begin{theorem}\label{thm:F_D_k}
If an honest Verifier observes a violation $s_N$ in Protocol~\ref{prot}, then the following fidelity can be certified
\begin{align}
    F_{DI}^{|\mathcal{D}|=N-k+1}\geq f_k \frac{s_N}{2^{N-k}}-\mu_k
\end{align}
where $f_k$ and $\mu_k$ are the constants for the $k$-partite Svetlichny inequalities, as defined in \eqref{F_sk}.
\end{theorem}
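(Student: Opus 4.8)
The plan is to combine Proposition~\ref{prop:sk_violation} with the STOPI bound \eqref{F_sk} by a reduction argument that matches the two figures of merit. First I would take any state $\tilde\rho$ in the feasible set $\mathcal{S}(s_N^{\mathcal{D}})$, i.e. an $N$-partite state (with a purification) that, together with honest observables $A_{x_i}^{(i)}$ and a joint dishonest observable $M_{\vec{x}_\mathcal{D}}^{(\mathcal{D})}$, achieves Svetlichny value at least $s_N$. By Proposition~\ref{prop:sk_violation}, grouping the $N-k+1$ dishonest parties into a single effective party with outcome $a'=J(\vec{a}_\mathcal{D})$ yields a $k$-partite strategy — on the $k$-partite state obtained by regarding the dishonest tensor factors as one system — that achieves $s_k \ge s_N / 2^{N-k}$ for a $k$-partite Svetlichny inequality in the \emph{standard} Bell scenario. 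Hence this grouped state lies in $\mathcal{S}(s_k)$ with $s_k \ge s_N/2^{N-k}$.

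Next I would apply the STOPI bound \eqref{F_sk} to that $k$-partite state: there exist local channels $\Lambda_{P_1},\ldots,\Lambda_{P_{k-1}}$ on the honest systems and a single channel $\Lambda_k$ on the grouped dishonest system such that the output has fidelity at least $f_k s_k - \mu_k$ with the $k$-partite GHZ state $\Phi^k$. Now I would post-compose $\Lambda_k$ with the CNOT-fan-out isometry ${\rm CNOT}^{N-k}_{\mathcal{D}}$ exactly as in the proof of Theorem~\ref{thm:ghz}: this is a legitimate joint channel $\Lambda_{\mathcal{D}}$ on the dishonest systems, and since ${\rm CNOT}^{N-k}_{\mathcal{D}}$ is a unitary (hence fidelity-preserving) that maps $\Phi^k\otimes\ket{0}^{\otimes(N-k)}$ to $\Phi^N$, the resulting fidelity with $\Phi^N$ is still at least $f_k s_k - \mu_k$. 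Because the collection of channels $(\Lambda_{P_i})_{P_i\in\mathcal{H}}, \Lambda_{\mathcal{D}}$ is an admissible choice in the maximization defining $F_{\rm DI}^{\mathcal{D}}$, we get $\max_{\Lambda_{\mathcal{D}},\Lambda_{P_i}} F(\cdots) \ge f_k s_k - \mu_k \ge f_k s_N/2^{N-k} - \mu_k$ (using $f_k\ge 0$ and monotonicity in $s_k$). Taking the infimum over $\tilde\rho\in\mathcal{S}(s_N^{\mathcal{D}})$, the bound $f_k s_N/2^{N-k} - \mu_k$ is uniform in $\tilde\rho$ and therefore survives the infimum, giving the claim.

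The main obstacle I anticipate is purely bookkeeping rather than conceptual: one must be careful that the STOPI bound \eqref{F_sk}, as stated, applies to the \emph{purification} of the $k$-partite state with the dishonest factor treated as one Hilbert space, and that ``grouping'' the dishonest parties does not disturb the honest parties' marginal observables (which is exactly the content of Proposition~\ref{prop:sk_violation} — the honest parties keep their individual strategies). One also needs to check that the extra degrees of freedom $\ket{\zeta}$ and the ancillas introduced by the CNOT fan-out are correctly absorbed into the channels so that the final object is genuinely of the form $\otimes_{P_i\in\mathcal{H}}\Lambda_{P_i}\otimes\Lambda_{\mathcal{D}}(\tilde\rho)$. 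A minor point is ensuring $f_k\ge 0$ so that replacing $s_k$ by the smaller lower bound $s_N/2^{N-k}$ only weakens the inequality in the right direction; this holds because $f_k$ is the (nonnegative) slope of a genuine fidelity lower bound. Once these are checked, the argument is a two-line composition.
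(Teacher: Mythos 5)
Your proposal is correct and follows essentially the same route as the paper: reduce to a $k$-partite violation $s_k\ge s_N/2^{N-k}$ via Proposition~\ref{prop:sk_violation}, invoke the standard $k$-partite STOPI bound \eqref{F_sk}, and lift $\Phi^k$ to $\Phi^N$ with the CNOT fan-out absorbed into the joint dishonest channel. The only cosmetic difference is that you argue ``forward'' (construct an admissible channel and lower-bound the max) while the paper writes the same content as a chain of inequalities starting from the definition of $F_{\rm DI}^{\mathcal{D}}$ and uses data processing to strip off the CNOT; the substance, including the points you flag about $f_k\ge 0$ and absorbing ancillas into $\Lambda_{\mathcal{D}}$, matches.
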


\begin{proof}
The proof follows from the  chain of inequalities:
\begin{align}
F_{DI}^{|\mathcal{D}|=N-k+1}&\nonumber\\
=\inf_{\tilde{\rho}\in \mathcal{S}({s}_N^\mathcal{D})} &\max_{\substack{\Lambda_{\mathcal{D}},\Lambda_{P_i}\\P_i\in\mathcal{H}}}F(\otimes_{P_i\in \mathcal{H}}\Lambda_{P_i}\otimes\Lambda_{\mathcal{D}}(\tilde{\rho}),\Phi^N)\\
\geq \inf_{{\rho}\in \mathcal{S}({s}'_k)} &\max_{\substack{\Lambda_{\mathcal{D}},\Lambda_{P_i}\\P_i\in\mathcal{H}}}F(\otimes_{P_i\in \mathcal{H}}\Lambda_{P_i}\otimes\Lambda_{\mathcal{D}}({\rho}),\Phi^N)\label{step1}\\
\geq  \inf_{{\rho}\in \mathcal{S}({s}'_k)} &F\left(\otimes_{P_i\in \mathcal{H}}\Lambda^*_{P_i}\otimes {\rm CNOT}_{\mathcal{D}}^{N-k} \circ \Lambda^*_{\mathcal{D}}({\rho}),\right.\label{step2}\\
&\left. \;\;\;\;\;\;\;\;\; {\rm CNOT}_{\mathcal{D}}^{N-k}(\Phi^k)\right)\nonumber\\
\geq \inf_{{\rho}\in \mathcal{S}({s}'_k)}& F(\otimes_{P_i\in \mathcal{H}}\Lambda^*_{P_i}\otimes  \Lambda^*_{\mathcal{D}}({\rho}),\Phi^k)\label{step3}\\
= \inf_{{\rho}\in \mathcal{S}({s}'_k)}&\max_{\substack{\Lambda_{\mathcal{D}},\Lambda_{P_i}\\P_i\in\mathcal{H}}} F(\otimes_{P_i\in \mathcal{H}}\Lambda^*_{P_i}\otimes  \Lambda^*_{\mathcal{D}}({\rho}),\Phi^k)\label{step4}\\
 \geq f_k  \,& \frac{s_N}{2^{N-k}} -\mu_k \label{step5}
  \end{align}
 where $s'_k=\frac{s_N}{2^{N-k}}$. In~\eqref{step1} we use the fact that every state that violates an $N$-partite Svetlichny inequality with dishonest set $\mathcal{D}$ by an amount $s_N$, also violates  a $k$-partite inequality by $s_k=s'_k$. For~\eqref{step2}, we fix a particular map, where $\Lambda^*_{P_i}, \Lambda^*_{\mathcal{D}}$ are chosen to  be the maps that maximize \eqref{step4} and ${\rm CNOT}^{N-k}$ corresponds to the CNOT gate applied to $N-k$ ancillas on the state $\ket{0}$ with control on system $\mathcal{D}$. Inequality~\eqref{step3} follows from data-processing
 $F(\Lambda(\rho),\Lambda(\sigma))\geq F(\rho,\sigma)$. Finally \eqref{step5} corresponds to the bound for the standard $k$-partite Bell scenario.
\end{proof}

Note that the bounds $F_{DI}^{|\mathcal{D}|=N-k+1}$ for different values of $k$ may not be ordered, because as $k$ increases the infimum is taken over a smaller set of states (i.e. those achieving the given violation with fewer dishonest parties), but the maximum is also taken over a smaller set of operations (since we have more honest parties constrained to acting locally). 
This is illustrated in Fig.~\ref{fig:st-numeric}, where we plot bounds for $F_{DI}^{|\mathcal{D}|=N-k+1}$ based on numerical evidence obtained using the STOPI technique~\cite{JedSelftest,Tim-st}. Details of the numerical evidence are presented in Appendix~\ref{app:robust}.

\begin{figure}[H]
    \centering
    \includegraphics[width=0.42\textwidth]{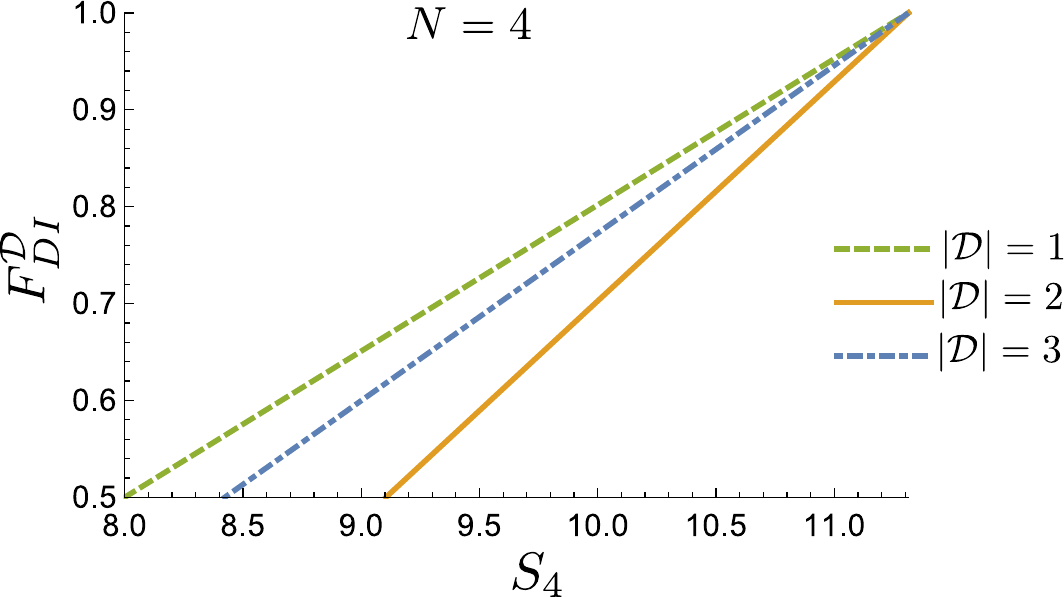}
    \caption{Numerical lower bounds on the fidelity as function of the violation of a 4-partite Svetlichny inequality. The curves represent scenarios with different number of dishonest parties.}
    \label{fig:st-numeric}
\end{figure}

Thm.~\ref{thm:F_D_k} gives a route to bound $ F_{\rm DI}^{\mathcal{D}}$ for an arbitrary set of dishonest parties of unknown size, as we can consider the worst case scenario 
\begin{align}
     F_{\rm DI}^{\mathcal{D}}\geq \min_k \DE{F_{DI}^{|\mathcal{D}|=N-k+1}}.
\end{align}
This provides a robust device-independent state certification in a network with dishonest parties.

\section{Self-testing with cooperating clusters}

Finally, we remark that the results derived in the previous sections can be extended to the scenario where different subgroups of parties may collude. We denote it a network with cooperating clusters, see Fig.~\ref{fig:clusters}.  
This scenario is motivated by a 
network where specific sets of parties are more likely to collaborate with each other.

\begin{figure}[H]
    \centering
    \includegraphics[width=0.49\textwidth]{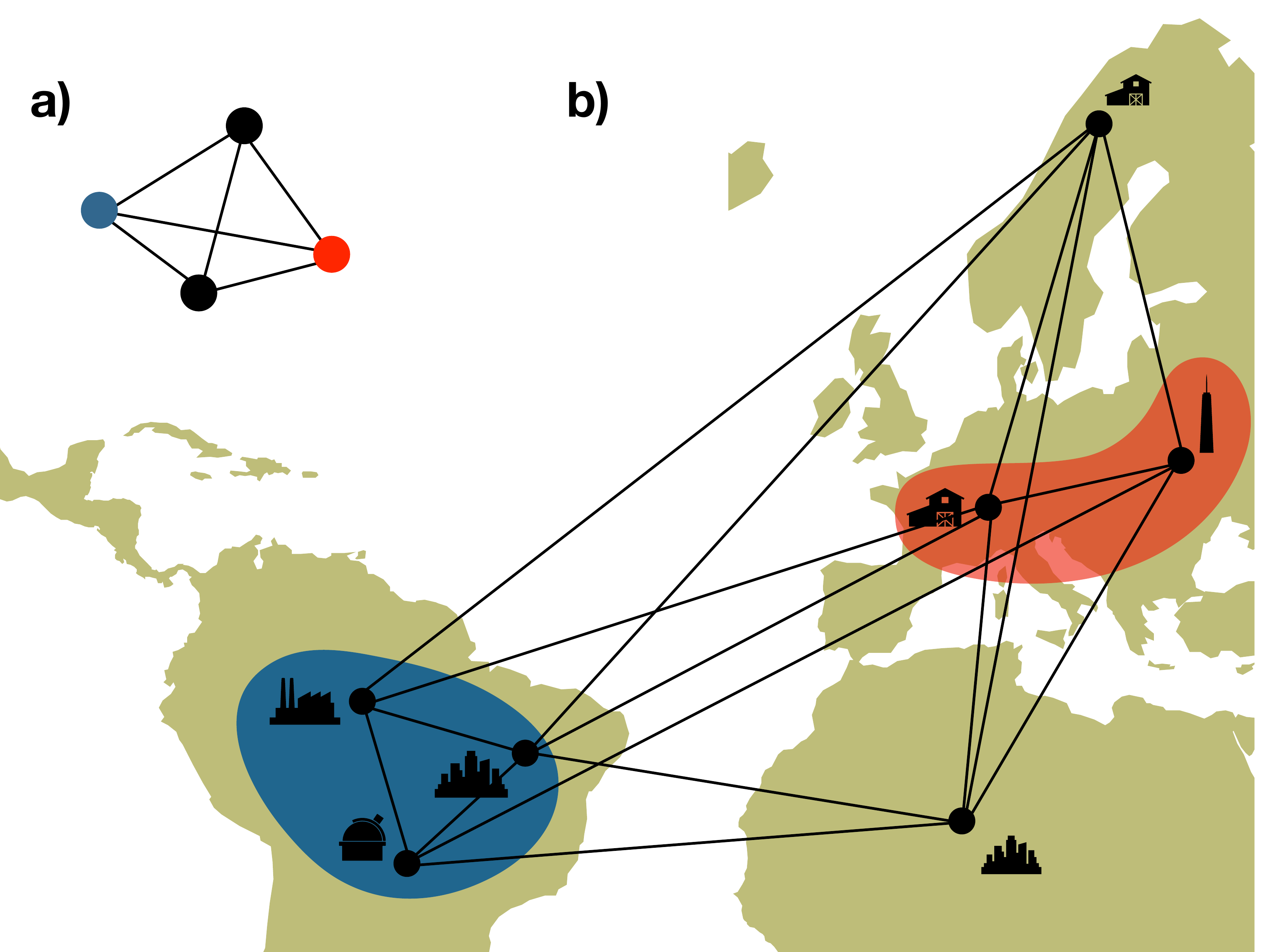}
    \caption{a) Effective network for self-testing, where each cluster is treated as a single party. b) Corresponding pictorial representation of a network with cooperating clusters. Shaded blue and red regions indicate nodes that are likely to collude.} 
    \label{fig:clusters}
\end{figure}

The results also follow from the symmetries of the Svetlichny inequalities. Indeed, in a scenario with $k$ disjoint clusters, a violation of the $N$-partite Svetlichny inequality also implies the violation of a $k$-partite Svetlichny inequality where each party  represents one cluster. Therefore it is straightforward to see that all the self-testing and entanglement certification results derived in the previous section also extend to the cluster scenario.  For more details see Appendix~\ref{app:clusters}.

\section{Discussion}
We investigated the task of device-independent state certification in a network with dishonest parties.
We introduced the concept of self-testing with dishonest parties, and proved self-testing of the GHZ state and Pauli measurements based on the $N$-partite Svetlichny inequalities.

The Svetlichny inequalities, with their ability to witness strong multipartite nonlocality, as defined in~\eqref{eq:localmodel}, seems to be a crucial ingredient for our results. 
Indeed, we conjecture that witnessing genuine multipartite nonlocality in the sense originally defined by Svetlichny is necessary for self-testing and device-independent entanglement certification with dishonest parties.

We applied the self-testing results to design a protocol that can certify the GHZ state in a network with dishonest parties. Our protocol can  witness genuine multipartite entanglement among the honest parties and the set of dishonest parties and also allows us to make strong statements about the fidelity of the distributed state with an $N$-partite GHZ state in the dishonest scenario.
Our results are proved under the IID assumption, i.e. that the state distributed by the source and the strategies of the parties are the same in every round of the protocol. An interesting outlook is to investigate how to drop this assumption in order to consider a fully adversarial scenario.

The results derived in this work have direct application to relax the trusted source assumption for network protocols based on the GHZ state, similarly to what was done in~\cite{AT_verified}. It is interesting to ask whether we can  extend the results to  self-test other classes of multipartite state in the presence of dishonest parties. In particular a protocol to certify the W-state~\cite{Wstate} can lift the anonymous transmission protocol of~\cite{AT_Wstate} to the untrusted source scenario.

\section*{Acknowledgement}
We thank Hermann Kampermann for helpful comments, Tim Coopmans for sharing his master thesis with details on the results of~\cite{Tim-st}, Boris Bourdoncle for giving us the inspiration to improve Fig. 2 and Ivan {\v{S}}upi{\'{c}} for feedback on an earlier version of this manuscript.   GM is funded by the Deutsche Forschungsgemeinschaft (DFG, German Research Foundation) under Germany's
Excellence Strategy - Cluster of Excellence Matter and
Light for Quantum Computing (ML4Q) EXC 2004/1 -
390534769. FB acknowledges financial support from the Alexander von Humboldt Foundation.\\
$\ddag$ The authors contributed equally to this work.

\bibliography{biblio}

\widetext

\appendix

\section{Proof of Proposition 1}\label{app:prop-proofs}

\begin{propA}\label{prop:appsk_violation}
If a strategy achieves value $s_N$ for the $N$-partite Svetlichny inequality, then  the same strategy  achieves value $s_{k}$ for a $k$-partite Svetlichny inequality, with
\begin{align}
    s_{k}\geq \frac{s_N}{2^{N-k}},
\end{align}
where  $k-1$ parties perform their respective individual strategy, and $N-k+1$ parties are grouped together with their joint outcome defined by $a'=J(\vec{a}_{\mathcal{D}})$.
\end{propA}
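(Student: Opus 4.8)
The plan is to exploit two structural features of the Svetlichny expression \eqref{eq:defS}. First, since each correlator depends only on the total outcome parity $J(\vec a)=\bigoplus_j a_j$, grouping the $N-k+1$ dishonest parties and coarse-graining their outcomes to $a'=J(\vec a_{\mathcal D})$ leaves every correlator literally unchanged: $\langle A_{x_1}^{(1)}\cdots A_{x_N}^{(N)}\rangle=\langle A_{x_1}^{(1)}\cdots A_{x_{k-1}}^{(k-1)}M_{\vec x_{\mathcal D}}^{(\mathcal D)}\rangle$ with $M_{\vec x_{\mathcal D}}^{(\mathcal D)}$ exactly the grouped observable of \eqref{eq:dishonest_corr}, because $J(\vec a)=a_1\oplus\cdots\oplus a_{k-1}\oplus a'$. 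After relabelling so that $\mathcal D=\{k,\dots,N\}$, I would then partition the $2^N$ terms of $S_N^{-}$ according to the value $\vec z=(x_{k+1},\dots,x_N)\in\{0,1\}^{N-k}$ of the inputs of all but one of the dishonest parties, leaving an inner sum over $(x_1,\dots,x_{k-1},x_k)$.

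The heart of the argument is the sign bookkeeping inside each block. Writing the total Hamming weight as $w_{\vec x}=w'+u$ with $w'=x_1+\dots+x_{k-1}+x_k$ and $u=w_{\vec z}$, and using $\binom{w'+u}{2}=\binom{w'}{2}+w'u+\binom{u}{2}$, the sign exponent reduces modulo $2$ to $\binom{w'}{2}+\binom{u}{2}$ when $u$ is even, and to $\binom{w'}{2}+w'+\binom{u}{2}=\binom{w'+1}{2}+\binom{u}{2}$ when $u$ is odd (using the exact identity $\binom{w'}{2}+w'=\binom{w'+1}{2}$). Since $(-1)^{\binom{w'}{2}}$ and $(-1)^{\binom{w'+1}{2}}$ are precisely the sign patterns defining $S_k^{-}$ and $S_k^{+}$, each block equals $(-1)^{\binom{u}{2}}$ times a bona fide $k$-partite Svetlichny expression $\tilde S_k(\vec z)$ for parties $1,\dots,k-1$ (still using their observables $A_{x_i}^{(i)}$) together with the grouped party, whose two settings are $M_{(0,\vec z)}^{(\mathcal D)}$ and $M_{(1,\vec z)}^{(\mathcal D)}$ and whose outcome is $a'=J(\vec a_{\mathcal D})$, that is, the same strategy restricted to two of the dishonest group's settings. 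Hence $S_N^{-}=\sum_{\vec z\in\{0,1\}^{N-k}}(-1)^{\binom{w_{\vec z}}{2}}\,\tilde S_k(\vec z)$, a signed sum of $2^{N-k}$ $k$-partite Svetlichny values; the decomposition of $S_N^{+}$ is identical after the same manipulation.

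The proposition then follows from the triangle inequality: $s_N=|S_N^{-}|\le\sum_{\vec z}|\tilde S_k(\vec z)|\le 2^{N-k}\max_{\vec z}|\tilde S_k(\vec z)|$, so some $\vec z$ realizes a $k$-partite Svetlichny value $s_k:=|\tilde S_k(\vec z)|\ge s_N/2^{N-k}$. I expect the only delicate step to be the second paragraph, namely correctly tracking which of the two variants $S_k^{\pm}$ each block collapses to as a function of the parity of $u=w_{\vec z}$, and checking that $(-1)^{\binom{w'}{2}}$ and $(-1)^{\binom{w'+1}{2}}$ indeed match the defining signs of the two $k$-partite inequalities; the correlator identity and the averaging are immediate.
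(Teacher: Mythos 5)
Your proposal is correct and follows essentially the same route as the paper's proof: the same grouping of the dishonest parties via the parity-coarse-grained observable, the same block decomposition of $S_N^{\pm}$ over the inputs $\vec{x}_{N-k}$ of all but one grouped party, the same sign bookkeeping showing each block is a $k$-partite Svetlichny expression of type $S_k^{+}$ or $S_k^{-}$ depending on the parity of $w_{\vec{x}_{N-k}}$, and the same triangle-inequality conclusion. The only cosmetic differences are that you phrase the exponent identity via $\binom{w'+u}{2}=\binom{w'}{2}+w'u+\binom{u}{2}$ rather than the paper's explicit fractions, and you take $S_N^{-}$ as the primary case where the paper takes $S_N^{+}$.
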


\begin{proof}
Without loss of generality, let us take $\mathcal{D}=\DE{k,k+1\ldots,N}$, and consider the $N$-partite Svetlichny inequality $S_N^+$ given in Eq.~\eqref{eq:SN}. The action of the dishonest parties is now represented by the joint binary observables $M^{(\mathcal{D})}_{x_kx_{k+1}\ldots x_{N}}$ with outcome $a'=\bigoplus_{i=k}^Na_i$, i.e.
\begin{align}\label{eq:untrustedsubs}
    A_{x_{k}}^{(k)}A_{x_{k+1}}^{(k+1)} \ldots     A_{x_{N}}^{(N)}\rightarrow M^{(\mathcal{D})}_{x_kx_{k+1}\ldots x_{N}}.
\end{align}
By substituting $M^{(\mathcal{D})}_{x_kx_{k+1}\ldots x_{N}}$ in the Svetlichny inequality, we get
\begin{align} \label{eq:SN+_dec}
\begin{split}
  S_N^+  =&\sum_{\vec{x}_{N-k}\in\DE{0,1}^{\times N-k}}\sum_{\vec{x}_{k}\in\DE{0,1}^{\times k}}(-1)^{\frac{(\omega_{\vec{x}_k}+\omega_{\vec{x}_{N-k}})\cdot(\omega_{\vec{x}_k}+\omega_{\vec{x}_{N-k}}+1)}{2}}\corr{A_{x_1}^{(1)}\ldots A_{x_{k-1}}^{(k-1)}M^{(\mathcal{D})}_{x_{k}x_{k+1}\ldots x_{N}}}\\
     =&\sum_{\vec{x}_{N-k}\in\DE{0,1}^{\times N-k}} (-1)^{\frac{\omega_{\vec{x}_{N-k}}\cdot(\omega_{\vec{x}_{N-k}}+1)}{2}}\sum_{\vec{x}_{k}\in\DE{0,1}^{\times k}}(-1)^{\frac{\omega_{\vec{x}_k}\cdot(\omega_{\vec{x}_k}+1)}{2}+\omega_{\vec{x}_{k}}\omega_{\vec{x}_{N-k}}}\corr{A_{x_1}^{(1)}\ldots A_{x_{k-1}}^{(k-1)}M^{(\mathcal{D})}_{x_{k}x_{k+1}\ldots x_{N}}}\\
      =&\sum_{\substack{\vec{x}_{N-k}\in\DE{0,1}^{\times N-k}\\\omega_{\vec{x}_{N-k}} \text{ odd}}} (-1)^{\frac{\omega_{\vec{x}_{N-k}}\cdot(\omega_{\vec{x}_{N-k}}+1)}{2}}\sum_{\vec{x}_{k}\in\DE{0,1}^{\times k}}(-1)^{\frac{\omega_{\vec{x}_k}\cdot(\omega_{\vec{x}_k}-1)}{2}}\corr{A_{x_1}^{(1)}\ldots A_{x_{k-1}}^{(k-1)}M^{(\mathcal{D})}_{x_{k}x_{k+1}\ldots x_{N}}}\\
      &+\sum_{\substack{\vec{x}_{N-k}\in\DE{0,1}^{\times N-k}\\\omega_{\vec{x}_{N-k}} \text{ even}}} (-1)^{\frac{\omega_{\vec{x}_{N-k}}\cdot(\omega_{\vec{x}_{N-k}}+1)}{2}}\sum_{\vec{x}_{k}\in\DE{0,1}^{\times k}}(-1)^{\frac{\omega_{\vec{x}_k}\cdot(\omega_{\vec{x}_k}+1)}{2}}\corr{A_{x_1}^{(1)}\ldots A_{x_{k-1}}^{(k-1)}M^{(\mathcal{D})}_{x_{k}x_{k+1}\ldots x_{N}}}\\
      =&\sum_{\substack{\vec{x}_{N-k}\in\DE{0,1}^{\times N-k}\\\omega_{\vec{x}_{N-k}} \text{ odd}}}(-1)^{\frac{\omega_{\vec{x}_{N-k}}\cdot(\omega_{\vec{x}_{N-k}}+1)}{2}} S_{k,\vec{x}_{N-k}}^- +\sum_{\substack{\vec{x}_{N-k}\in\DE{0,1}^{\times N-k}\\\omega_{\vec{x}_{N-k}} \text{ even}}}(-1)^{\frac{\omega_{\vec{x}_{N-k}}\cdot(\omega_{\vec{x}_{N-k}}+1)}{2}} S_{k,\vec{x}_{N-k}}^+
\end{split}
\end{align}
where $S_{k,\vec{x}_{N-k}}^{\pm}$ is the k-partite Svetlichny inequality with the set of dishonest parties  using the pair of observables $\DE{M^{(\mathcal{D})}_{0,\vec{x}_{N-k}},M^{(\mathcal{D})}_{1,\vec{x}_{N-k}}}$ labelled by $\vec{x}_{N-k}$. For the third equality we use the fact that
\begin{align}
\omega_{\vec{x}_{N-k}}\text{ even: }   (-1)^{{\frac{\omega_{\vec{x}_k}\cdot(\omega_{\vec{x}_k}+1)}{2}}+ \omega_{\vec{x}_k}\omega_{\vec{x}_{N-k}}}= (-1)^{{\frac{\omega_{\vec{x}_k}\cdot(\omega_{\vec{x}_k}+1)}{2}}},
\end{align}
\begin{align}
 \omega_{\vec{x}_{N-k}}\text{ odd: }   (-1)^{{\frac{\omega_{\vec{x}_k}\cdot(\omega_{\vec{x}_k}+1)}{2}}+ \omega_{\vec{x}_k}\omega_{\vec{x}_{N-k}}}= (-1)^{{\frac{\omega_{\vec{x}_k}\cdot(\omega_{\vec{x}_k}-1)}{2}}+\omega_{\vec{x}_k}(\omega_{\vec{x}_{N-k}}+1)}=(-1)^{{\frac{\omega_{\vec{x}_k}\cdot(\omega_{\vec{x}_k}-1)}{2}}}.
\end{align}

Now let $s_N=|S_N^+|$ and $s_k=\max\DE{|S^-_{k,\vec{x}_{N-k}}|,|S^+_{k,\vec{x}_{N-k}}|}$. Then we have 
\begin{align}
\begin{split}
  s_N  =&\left|\sum_{\substack{\vec{x}_{N-k}\in\DE{0,1}^{\times N-k}\\\omega_{\vec{x}_{N-k}} \text{ odd}}}(-1)^{\frac{\omega_{\vec{x}_{N-k}}\cdot(\omega_{\vec{x}_{N-k}}+1)}{2}} S_{k,\vec{x}_{N-k}}^- +\sum_{\substack{\vec{x}_{N-k}\in\DE{0,1}^{\times N-k}\\\omega_{\vec{x}_{N-k}} \text{ even}}}(-1)^{\frac{\omega_{\vec{x}_{N-k}}\cdot(\omega_{\vec{x}_{N-k}}+1)}{2}} S_{k,\vec{x}_{N-k}}^+\right|\\
 \leq &\sum_{\substack{\vec{x}_{N-k}\in\DE{0,1}^{\times N-k}\\\omega_{\vec{x}_{N-k}} \text{ odd}}}\left|S_{k,\vec{x}_{N-k}}^-\right|+\sum_{\substack{\vec{x}_{N-k}\in\DE{0,1}^{\times N-k}\\\omega_{\vec{x}_{N-k}} \text{ even}}}\left| S_{k,\vec{x}_{N-k}}^+\right|\\
 \leq&\, 2^{N-k}\max\DE{|S^-_{k,\vec{x}_{N-k}}|,|S^+_{k,\vec{x}_{N-k}}|}\\
 =&\,2^{N-k}\,s_k
\end{split}
\end{align}

This implies that there is a $k$-partite Svetlichny inequality, among the  $\DE{S^-_{k,\vec{x}_{N-k}},S^+_{k,\vec{x}_{N-k}}}$, that achieves a value $s_k\geq \frac{s_N}{2^{N-k}}$. 

A similar result can be straightforwardly derived for $S_N^-$. 
Moreover, permutational invariance of the Svetlichny inequality implies the result for an arbitrary set $\mathcal{D}$.
Therefore, when $N-k+1$ parties collaborate in order to achieve a violation $s_N$ for the $N$-partite Svetlichny inequality, the same strategy violates a $k$-partite Svetlichny inequality by at least $    s_k\geq \frac{s_N}{2^{N-k}}$.
\end{proof}

\section{Self-testing with dishonest parties}\label{app:net-selftest}

In order to prove self-testing for the Svetlichny inequality with dishonest parties, we introduce a sum of squares decomposition for the shifted Bell operator $2^{N-1} \sqrt{2} \mathds{1} - \hat{S}_{N}^+$ assuming the set of dishonest parties employ the joint observables $M^{(\mathcal{D})}_{\vec{x}_{\mathcal{D}}}$, where  $\hat{S}_N^+$ is the operator associated to the Bell inequality $S_N^+$. To do so, we use the symmetries of the Svetlichny inequality, explored in the proof of Proposition~\ref{prop:sk_violation}, in order to break it into sums of CHSH-like inequalities~\cite{CHSH}, where the inputs of $N-2$ parties is fixed. Rewriting \eqref{eq:SN+_dec} for the case {of a decomposition into $S_2^{\pm}$ terms}, one gets

\begin{align}\label{app:SNdec2}
S_N^+ =&\sum_{\substack{\vec{x}_{N-2}\in\DE{0,1}^{\times N-2}\\\omega_{\vec{x}_{N-2}} \text{ odd}}}(-1)^{\frac{\omega_{\vec{x}_{N-2}}\cdot(\omega_{\vec{x}_{N-2}}+1)}{2}} S_{2,\vec{x}_{N-2}}^- +\sum_{\substack{\vec{x}_{N-2}\in\DE{0,1}^{\times N-2}\\\omega_{\vec{x}_{N-2}} \text{ even}}}(-1)^{\frac{\omega_{\vec{x}_{N-2}}\cdot(\omega_{\vec{x}_{N-2}}+1)}{2}} S_{2,\vec{x}_{N-2}}^+,
\end{align}
where we can now write explicitly
\begin{equation}
S_{2,\vec{x}_{N-2}}^+ = \langle A_{0}^{(1)} A_{0}^{(2)}  \ldots A_{x_{k-1}}^{(k-1)}M_{\vec{x}_{\mathcal{D}}}^{(\mathcal{D})} \rangle 
- \langle A_{0}^{(1)} A_{1}^{(2)}\ldots A_{x_{k-1}}^{(k-1)}M_{\vec{x}_{\mathcal{D}}}^{(\mathcal{D})} \rangle    
- \langle A_{1}^{(1)} A_{0}^{(2)} \ldots A_{x_{k-1}}^{(k-1)}M_{\vec{x}_{\mathcal{D}}}^{(\mathcal{D})}\rangle    
-\langle A_{1}^{(1)} A_{1}^{(2)} \ldots A_{x_{k-1}}^{(k-1)}M_{\vec{x}_{\mathcal{D}}}^{(\mathcal{D})} \rangle    ,
\end{equation}
and
\begin{equation}
S_{2,\vec{x}_{N-2}}^- = \langle A_{0}^{(1)} A_{0}^{(2)} \ldots A_{x_{k-1}}^{(k-1)}M_{\vec{x}_{\mathcal{D}}}^{(\mathcal{D})} \rangle 
+ \langle A_{0}^{(1)} A_{1}^{(2)} \ldots A_{x_{k-1}}^{(k-1)}M_{\vec{x}_{\mathcal{D}}}^{(\mathcal{D})} \rangle    
+ \langle A_{1}^{(1)} A_{0}^{(2)} \ldots A_{x_{k-1}}^{(k-1)}M_{\vec{x}_{\mathcal{D}}}^{(\mathcal{D})} \rangle    
- \langle A_{1}^{(1)} A_{1}^{(2)} \ldots A_{x_{k-1}}^{(k-1)}M_{\vec{x}_{\mathcal{D}}}^{(\mathcal{D})} \rangle   .
\end{equation}
Note that, as opposed to the previous Appendix, here we are taking a decomposition that involves both trusted and untrusted parties. That is, while we keep having $N -k + 1$ untrusted parties, with $k$ arbitrary, the decomposition 
\eqref{app:SNdec2} is taken to be according to the $2|N-2$ partition independently from the value of $k$. One can also see that the inequalities $S_{2,\vec{x}_{N-2}}^\pm$ are simply two variations of CHSH, played between party $1$ and $2$, while the settings of the other parties are fixed. We can then write a SOS for both of these inequalities 

\begin{align}
2 \sqrt{2} \mathds{1} - \hat{S}_{2,\vec{x}_{N-2}}^+ & =  \frac{1}{\sqrt{2}} \left[ \left( \mathds{1} - \frac{\hat{A}_0^{(1)} - \hat{A}_1^{(1)}}{\sqrt{2}} \hat{A}_0^{(2)} \ldots A_{x_{k-1}}^{(k-1)}M_{\vec{x}_{\mathcal{D}}}^{(\mathcal{D})} \right)^2 
+ \left( \mathds{1} + \frac{\hat{A}_0^{(1)} + \hat{A}_1^{(1)}}{\sqrt{2}} \hat{A}_1^{(2)} \ldots A_{x_{k-1}}^{(k-1)}M_{\vec{x}_{\mathcal{D}}}^{(\mathcal{D})} \right)^2  \right] \, \, , \\
2 \sqrt{2} \mathds{1} - \hat{S}_{2,\vec{x}_{N-2}}^- & =  \frac{1}{\sqrt{2}} \left[ \left( \mathds{1} - \frac{\hat{A}_0^{(1)} + \hat{A}_1^{(1)}}{\sqrt{2}} \hat{A}_0^{(2)} \ldots A_{x_{k-1}}^{(k-1)}M_{\vec{x}_{\mathcal{D}}}^{(\mathcal{D})} \right)^2 
+ \left( \mathds{1} - \frac{\hat{A}_0^{(1)} - \hat{A}_1^{(1)}}{\sqrt{2}} \hat{A}_1^{(2)} \ldots A_{x_{k-1}}^{(k-1)}M_{\vec{x}_{\mathcal{D}}}^{(\mathcal{D})} \right)^2  \right] \, \, ,
\end{align}
and combine them to get a SOS for the whole inequality
\begin{align}\label{eq:SOS_SNdishonest}
\begin{split}
2^{N-1} \sqrt{2} \mathds{1} - \hat{S}_{N}^+  =&
\frac{1}{\sqrt{2}} \sum_{\substack{\vec{x}_{N-2}\in\DE{0,1}^{\times N-2}\\\omega_{\vec{x}_{N-2}} \text{ odd}}}  \left[ \left( \mathds{1}  - (-1)^{\frac{\omega_{\vec{x}_{N-2}}\cdot(\omega_{\vec{x}_{N-2}}+1)}{2}} \frac{{\hat{A}}_0^{(1)} + {\hat{A}}_1^{(1)}}{\sqrt{2}} {\hat{A}}_0^{(2)} \ldots A_{x_{k-1}}^{(k-1)}M_{\vec{x}_{\mathcal{D}}}^{(\mathcal{D})} \right)^2 \right.  \\ 
 & + \left. \left( \mathds{1} - (-1)^{\frac{\omega_{\vec{x}_{N-2}}\cdot(\omega_{\vec{x}_{N-2}}+1)}{2}} \frac{{A}_0^{(1)} - {{A}}_1^{(1)}}{\sqrt{2}} {\hat{A}}_1^{(2)} \ldots A_{x_{k-1}}^{(k-1)}M_{\vec{x}_{\mathcal{D}}}^{(\mathcal{D})} \right)^2  \right]  \\
 &+   \frac{1}{\sqrt{2}} \sum_{\substack{\vec{x}_{N-2}\in\DE{0,1}^{\times N-2}\\\omega_{\vec{x}_{N-2}} \text{ even}}} \left[ \left( \mathds{1} -(-1)^{\frac{\omega_{\vec{x}_{N-2}}\cdot(\omega_{\vec{x}_{N-2}}+1)}{2}}\frac{{A}_0^{(1)} - {{A}}_1^{(1)}}{\sqrt{2}} {{A}}_0^{(2)} \ldots A_{x_{k-1}}^{(k-1)}M_{\vec{x}_{\mathcal{D}}}^{(\mathcal{D})} \right)^2 \right. \\
 &  + \left. \left( \mathds{1} + (-1)^{\frac{\omega_{\vec{x}_{N-2}}\cdot(\omega_{\vec{x}_{N-2}}+1)}{2}}\frac{{\hat{A}}_0^{(1)} + {{A}}_1^{(1)}}{\sqrt{2}} {{A}}_1^{(2)} \ldots A_{x_{k-1}}^{(k-1)}M_{\vec{x}_{\mathcal{D}}}^{(\mathcal{D})} \right)^2  \right] \, .
 \end{split}
\end{align}

This indeed proves that the maximal violation of  ${S}_{N}^+$ achievable by quantum correlations considering the global action of dishonest parties  is also $2^{N-1} \sqrt{2}$. Moreover, the state achieving it satisfies to following conditions
\begin{eqnarray}
(-1)^{\frac{\omega_{\vec{x}_{N-2}}\cdot(\omega_{\vec{x}_{N-2}}+1)}{2}} \frac{{A}_0^{(1)} + {A}_1^{(1)}}{\sqrt{2}} {A}_0^{(2)} \ldots A_{x_{k-1}}^{(k-1)}M_{\vec{x}_{\mathcal{D}}}^{(\mathcal{D})} \ket{\psi} & = & \ket{\psi} \quad \omega_{\vec{x}_{N-2}} \text{ odd}, \label{app:SOSodd1}\\
(-1)^{\frac{\omega_{\vec{x}_{N-2}}\cdot(\omega_{\vec{x}_{N-2}}+1)}{2}} \frac{{A}_0^{(1)} - {A}_1^{(1)}}{\sqrt{2}} {A}_1^{(2)} \ldots A_{x_{k-1}}^{(k-1)}M_{\vec{x}_{\mathcal{D}}}^{(\mathcal{D})}\ket{\psi} & = & \ket{\psi} \quad \omega_{\vec{x}_{N-2}} \text{ odd}, \label{app:SOSodd2} \\ 
(-1)^{\frac{\omega_{\vec{x}_{N-2}}\cdot(\omega_{\vec{x}_{N-2}}+1)}{2}}
\frac{{A}_0^{(1)} - {A}_1^{(1)}}{\sqrt{2}} {A}_0^{(2)} \ldots A_{x_{k-1}}^{(k-1)}M_{\vec{x}_{\mathcal{D}}}^{(\mathcal{D})} \ket{\psi} & = & \ket{\psi} \quad \omega_{\vec{x}_{N-2}} \text{ even}, \label{app:SOSeven1} \\
(-1)^{\frac{\omega_{\vec{x}_{N-2}}\cdot(\omega_{\vec{x}_{N-2}}+1)}{2}}
\frac{{A}_0^{(1)} + {A}_1^{(1)}}{\sqrt{2}} {A}_1^{(2)} \ldots A_{x_{k-1}}^{(k-1)}M_{\vec{x}_{\mathcal{D}}}^{(\mathcal{D})} \ket{\psi} & = & - \ket{\psi} \quad \omega_{\vec{x}_{N-2}} \text{ even}. \label{app:SOSeven2}
\end{eqnarray}

We will now show that by introducing the substitutions
\begin{eqnarray}\label{app:SNsubs}
\frac{{A}_0^{(1)} + {A}_1^{(1)}}{\sqrt{2}}  &= &   - \hat{Z}_1 \quad , \quad  \frac{{A}_0^{(1)} - {A}_1^{(1)}}{\sqrt{2}}  =  \hat{X}_1 \nonumber \\
{A}_0^{(i)} & = &  \hat{Z}_i \quad , \quad  {A}_1^{(i)} =   \hat{X}_i  \, \, , \quad  i = 2, \ldots, {k-1}  \, \, , \nonumber \\
{M}^{(\mathcal{D})}_{0, \vec{x}_{N-k}} & = &  (-1)^{\frac{\omega_{\vec{x}_{N-k}} (\omega_{\vec{x}_{N-k}} + 1)}{2}}\hat{Z}_{k} \quad , \quad {M}^{(\mathcal{D})}_{1, \vec{x}_{N-k}} =   -(-1)^{\frac{(\omega_{\vec{x}_{N-k}} +1) (\omega_{\vec{x}_{N-k}} + 2)}{2}} \hat{X}_{k}  \, \, , \quad \omega_{\vec{x}_{N-k}} \quad \text{even}  \\
{M}^{(\mathcal{D})}_{0, \vec{x}_{N-k}} & = &  -(-1)^{\frac{\omega_{\vec{x}_{N-k}} (\omega_{\vec{x}_{N-k}} + 1)}{2}} \hat{X}_{k} \quad , \quad {M}^{(\mathcal{D})}_{1, \vec{x}_{N-k}} =   (-1)^{\frac{(\omega_{\vec{x}_{N-k}} +1) (\omega_{\vec{x}_{N-k}} + 2)}{2}} \hat{Z}_{k}  \, \, , \quad \omega_{\vec{x}_{N-k}} \quad \text{odd} \nonumber
\end{eqnarray}
some of the conditions arising from the SOS correspond to the stabilizing conditions of the generators of a fully-connected graph states of $k$ {parties} \cite{hein2006entanglement}. That is, equations of the form 
$\hat{S}_i \ket{\psi} = \ket{\psi}$ for all $i = 1,\ldots,k$, where 
\begin{equation}
 \hat{S}_i = \hat{X}_i \bigotimes_{j \in [k]/i} \hat{Z}_j
\end{equation}
Let us see that explicitly: take an arbitrary input string $\vec{x}_{N-k}$ for the untrusted parties $k+1,\ldots,N$. First, consider conditions \eqref{app:SOSeven1}, \eqref{app:SOSeven2} for the string choice $\vec{x}_{N-2} = 0\ldots00\vec{x}_{N-k}$ if $\omega_{\vec{x}_{N-k}}$ is even and string choice  $\vec{x}_{N-2} = 0\ldots01\vec{x}_{N-k}$ if odd. The corresponding Hamming weight reads $\omega_{\vec{x}_{N-2}} = \omega_{\vec{x}_{N-k}}$ ($\omega_{\vec{x}_{N-k}} +1$) when $\omega_{\vec{x}_{N-k}}$ is even (odd). In both cases, $\omega_{\vec{x}_{N-2}}$ is indeed an even number. Written with the substitutions \eqref{app:SNsubs}, the two conditions become
\begin{eqnarray}
\hat{X}_1 \hat{Z}_2 \hat{Z}_3 \ldots \hat{Z}_k \ket{\psi}& = \ket{\psi} \label{app:S1} \\
- \hat{Z}_1 \hat{X}_2 \hat{Z}_3 \ldots \hat{Z}_k\ket{\psi}& = - \ket{\psi} \label{app:S2}
\end{eqnarray}
which correspond to the stabilising conditions for $\hat{S}_1$ and $\hat{S}_2$. Then, consider all the strings $\vec{y}_{k-3}$ of length $k-3$ with Hamming weight $1$ (that is, where all except one entry are $0$). We take the condition \eqref{app:SOSodd1} for the strings $\vec{x}_{N-2} = 00\vec{y}_{k-3}0\vec{x}_{N-k}$ if $\omega_{\vec{x}_{N-k}}$ is even and string choice  $\vec{x}_{N-2} =  00\vec{y}_{k-3}1\vec{x}_{N-k}$ if odd. The corresponding Hamming weight reads $\omega_{\vec{x}_{N-2}} = \omega_{\vec{x}_{N-k}}+1$ ($\omega_{\vec{x}_{N-k}} +2$) when $\omega_{\vec{x}_{N-k}}$ is even (odd).
Hence, we fulfill the required condition of it being an odd number. For each string $\vec{y}_{k-3}$ and after applying the substitutions,  \eqref{app:SOSodd1} reads
\begin{eqnarray}\label{app:S3}
     (-1)^{(\omega_{\vec{x}_{N-k}}+1)} (-\hat{Z}_1) \hat{Z}_2 \ldots \hat{Z}_{i-1} \hat{X}_i  \hat{Z}_{i+1} \ldots \hat{Z}_k \ket{\psi} &=& \ket{\psi} \quad  \omega_{\vec{x}_{N-k}} \quad \text{even} \nonumber \\
    (-1)^{(\omega_{\vec{x}_{N-k}}+2)} (-\hat{Z}_1) \hat{Z}_2 \ldots \hat{Z}_{i-1} \hat{X}_i  \hat{Z}_{i+1} \ldots \hat{Z}_k \ket{\psi} &=& \ket{\psi} \quad  \omega_{\vec{x}_{N-k}} \quad \text{odd} \, ,
\end{eqnarray}
where $i = 3,\ldots,k-1$ corresponds to the position where a $1$ appears in the string $00\vec{y}_{k-3}$. In both cases the minus signs compensate each other and the equations \eqref{app:S3} 
recover the stabilising conditions $\hat{S}_i$ for $i = 3,\ldots,k-1$. 
We are now left with the last stabilising condition to be recovered. For that, we consider again equation \eqref{app:SOSodd1}, with string choice $\vec{x}_{N-2} = 0\ldots01\vec{x}_{N-k}$ if $\omega_{\vec{x}_{N-k}}$ is even and string choice  $\vec{x}_{N-2} = 0\ldots00\vec{x}_{N-k}$ if odd. The corresponding Hamming weight reads $\omega_{\vec{x}_{N-2}} = \omega_{\vec{x}_{N-k}} +1$ ($\omega_{\vec{x}_{N-k}}$) when $\omega_{\vec{x}_{N-k}}$ is even (odd). In both cases, $\omega_{\vec{x}_{N-2}}$ is an odd number as required. After applying once again the substitutions \eqref{app:SNsubs}, the condition becomes
\begin{equation}
(-\hat{Z_1}) \hat{Z_2} \ldots \hat{Z}_{k-1} (- \hat{X}_k) \ket{\psi} = \ket{\psi}  \, ,
\end{equation}
which is exactly the stabilising condition for $\hat{S}_k$.
Hence, we have recovered all the stabilizing conditions for the $N$ generators $\hat{S}_i = \hat{X}_i \bigotimes_{j \in [k]/i} \hat{Z}_j$ for $i = 1,\ldots,k$. 

Recall now that a fully-connected graph state is local-unitary equivalent to a GHZ state \cite{hein2006entanglement}. From here on, we can proceed as done in \cite{baccari2020scalable} and prove both measurement and state self-testing from the maximal violation of the Svetlichny inequality. For the sake of clarity, we will revise here the main steps of the proof, stating them in the framework of a dishonest-parties scenario.

\textit{Step 1: Operator properties.} Notice that the operators $\hat{X}_1, \hat{Z}_1$ anticommute by construction, while $\hat{X}_i^2 = \hat{Z}_i^2 = \mathds{1}$ for all $i \neq 1$. By following the same steps as in \cite{baccari2020scalable}, one can exploit the stabilising conditions \eqref{app:S1},\eqref{app:S2},\eqref{app:S3} to show that
\begin{eqnarray}\label{app:qubitprop}
\lbrace \hat{X}_i ,  \hat{Z}_i \rbrace \ket{\psi}& = \ket{\psi} \\
\hat{Z}_i^2 \ket{\psi} = \hat{X}_i^2 \ket{\psi} & = \ket{\psi}
\end{eqnarray}
for all $i = 1,\ldots, k$. Notice that from here on we are defining the operators $\hat{X}_k$, $\hat{Z}_k$ by fixing a specific choice of input string $\vec{x}_{N-k}$.

\textit{Step 2: SWAP isometry with dishonest parties.} Let us now define the isometries mapping the measured state to the target GHZ state. For that, we introduce a modification of the so-called SWAP isometry that is apt to a scenario with dishonest parties. The output of the SWAP isometry reads as follows
\begin{equation}\label{app:iso}
\Lambda \left(\ket{+}^{\otimes k}\otimes \ket{\psi}\right)     
=\sum_{\tau \in \{0,1\}^k}\ket{\tau}\otimes \left( \bigotimes_{j=1}^k\hat{X}_j^{\tau_j}\hat{Z}_j^{(\tau_j)} \right) \ket{\psi},
\end{equation}
where $\hat{X}_i$ and $\hat{Z}_i$ are those defined in \eqref{app:SNsubs} and we have also defined $\hat{Z}_i^{(\tau_j)} = [\mathbbm{1}+(-1)^{\tau_j}\hat{Z}_j]/2$, while the summation is over all $k$-element sequences $(\tau_1,\ldots,\tau_k)$ with each $\tau_i\in\{0,1\}$.
The action of this isometry is to perform a unitary operation $\Lambda =\Lambda_1\otimes\ldots \Lambda_{k-1} \otimes\Lambda_{\mathcal{D}}$
on the state $\ket{+}^{\otimes k}\otimes \ket{\psi}$. Each unitary $\Lambda_i$ for $i \leq k-1$ acts on the $i$-th particle of $\ket{\psi}$ and one of the qubits in the state $\ket{+}$, while the last unitary $\Lambda_{\mathcal{D}}$ acts globally on the $N-k-1$ particles of the dishonest parties and a single qubit state $\ket{+}$. This follows from the fact that the operators $\hat{X_k}$, $\hat{Z_k}$ are defined by involving measurement operators of all dishonest parties. A visual representation of a local branch and the global branch of the isometry $\Lambda$ is shown in Fig. \ref{fig:swapp}.
\begin{figure}[h!]
\centering
\includegraphics[width=0.8\textwidth]{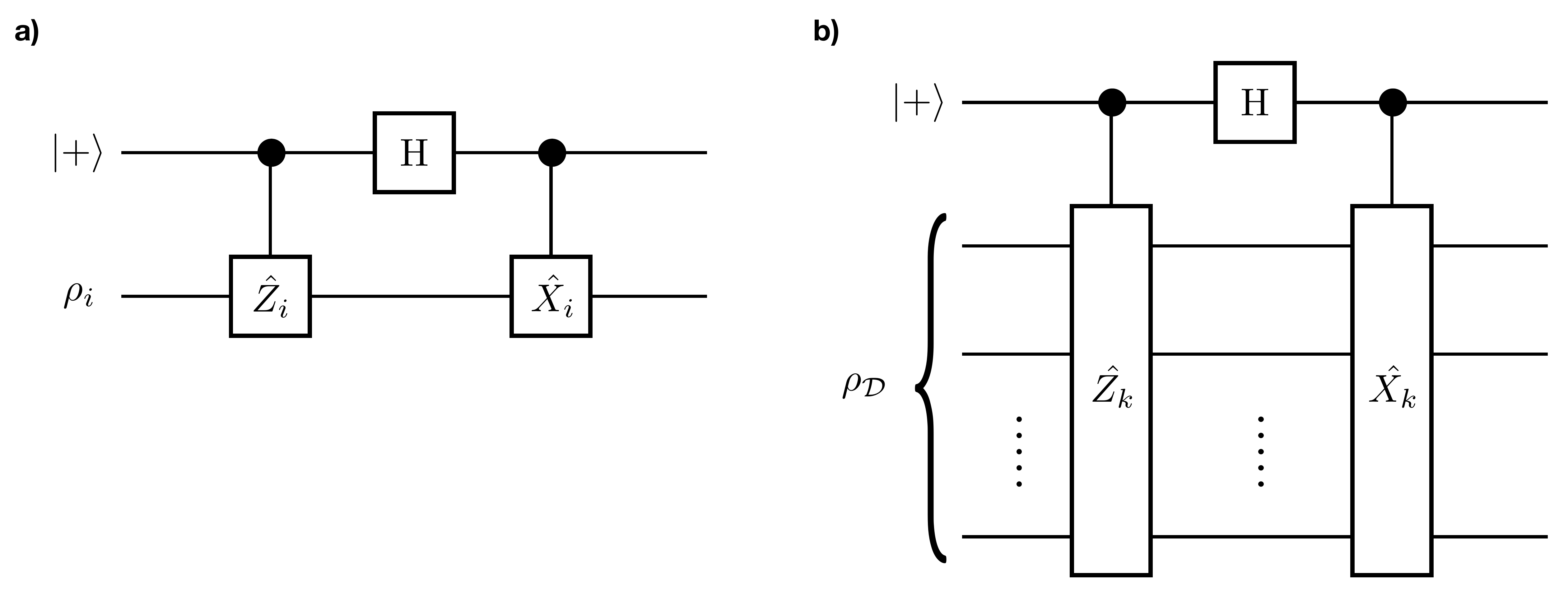}
\caption{a) A local branch $\Lambda_i$ of the SWAP gate. Each local branch acts on the subsystem of $\ket{\psi}$ in possession of the $i$-th honest party and one ancillary qubit in the state $\ket{+}$. b) The global branch $\Lambda_{\mathcal{D}}$ of the SWAP gate, acting jointly on the subsystems of $\ket{\psi}$ shared by set of $N-k-1$ dishonest parties and a single ancillary qubit in the state $\ket{+}$.} \label{fig:swapp}
\end{figure}

\textit{Step 3: Self-testing statements.} With the SWAP isometry defined in \eqref{app:iso}, we can now proceed in exactly the same way as in  \cite{baccari2020scalable}. By applying a properly chosen sequence of substitutions coming from the properties \eqref{app:qubitprop} and the stabilising conditions \eqref{app:S1},\eqref{app:S2},\eqref{app:S3}, one can show 
\begin{eqnarray}\label{iso1}
 \Lambda\left(\ket{+}^{\otimes k}\otimes \ket{\psi}\right) &=&\sum_{\tau \in \{0,1\}^k}(-1)^{n(\tau)}\ket{\tau}\otimes\left( \bigotimes_{j=1}^k Z_j^{(0)} \right) \ket{\psi} \nonumber\\ 
&&= \ket{\psi_G}\otimes\ket{\zeta},
\end{eqnarray}
where $\ket{\zeta} = \left( \bigotimes_{j=1}^k Z_j^{(0)} \right) \ket{\psi}$ represent some uncorrelated degrees of freedom and
\begin{equation}
\ket{\psi_G} = \sum_{\tau \in \{0,1\}^k}(-1)^{n(\tau)}\ket{\tau}    
\end{equation}
is exactly the expression of the $k$-partite fully connected graph state in the computational basis \cite{hein2006entanglement}, $n(\tau)$ counting the number of edges connecting qubits which are in state $\ket{1}$ in ket $\ket{\tau}$. Since the GHZ state is locally-unitary equivalent to $\ket{\psi_G}$, we have just shown the existence of an isometry of the form $\Lambda_1\otimes\ldots \Lambda_{k-1} \otimes\Lambda_{\mathcal{D}}$ mapping the state maximally violating the Svetlichny inequality {with dishonest parties} to a $k$-partite GHZ state.
With an analogous procedure, one can show measurement self-testing, namely 
\begin{eqnarray}\label{isomeas}
 \Lambda\left(\ket{+}^{\otimes k}\otimes A_{x_1}^{(1)}\otimes \ldots \otimes A_{x_{k-1}}^{(k-1)} \otimes M_{\vec{x}_\mathcal{D}}^{(\mathcal{D})}  \ket{\psi}\right) = \bar{A}_{x_1}^{(1)}\otimes \ldots \otimes \bar{A}_{x_{k-1}}^{(k-1)} \otimes \bar{A}_{\vec{x}_\mathcal{D}}^{(k)} \ket{\psi_G}\otimes\ket{\zeta},
\end{eqnarray}
where the states $\ket{\psi_G}$ and $\ket{\zeta}$ are defined as before and the target measurements are
\begin{eqnarray}\label{app:SNmeas}
 \bar{A}_0^{(1)}&= &    \frac{\sigma_x - \sigma_z}{\sqrt{2}}  \quad , \quad   \bar{A}_1^{(1)} = -   \frac{\sigma_x + \sigma_z}{\sqrt{2}}  \nonumber \\
\bar{A}_0^{(i)} & = &  \sigma_z \quad , \quad  \bar{A}_1^{(i)} =   \sigma_x \, \, , \quad  i = 2, \ldots, {k-1}  \, \, , \nonumber \\
\bar{A}^{(k)}_{0, \vec{x}_{N-k}} & = &  (-1)^{\frac{\omega_{\vec{x}_{N-k}} (\omega_{\vec{x}_{N-k}} + 1)}{2}}\sigma_z \quad , \quad \bar{A}^{(k)}_{1, \vec{x}_{N-k}} =   -(-1)^{\frac{(\omega_{\vec{x}_{N-k}} +1) (\omega_{\vec{x}_{N-k}} + 2)}{2}} \sigma_x  \, \, , \quad \omega_{\vec{x}_{N-k}} \quad \text{even}  \\
\bar{A}^{(k)}_{0, \vec{x}_{N-k}} & = &  -(-1)^{\frac{\omega_{\vec{x}_{N-k}} (\omega_{\vec{x}_{N-k}} + 1)}{2}} \sigma_x \quad , \quad \bar{A}^{(k)}_{1, \vec{x}_{N-k}} =   (-1)^{\frac{(\omega_{\vec{x}_{N-k}} +1) (\omega_{\vec{x}_{N-k}} + 2)}{2}} \sigma_z  \, \, , \quad \omega_{\vec{x}_{N-k}} \quad \text{odd} \nonumber
\end{eqnarray}
with $\sigma_x,\sigma_z$ being the qubit Pauli operators. Notice that the correlations obtained by performing the above choice of measurements on the state $\ket{\psi_G}$ maximally violates a $k$-partite Svetlichny inequality for any pair of operators for the grouped dishonest parties associated to a specific input string $\vec{x}_{N-k}$.
Notice that, since the last $N-k+1$ parties are untrusted, the only measurement self-testing statement we can make about them is that the operators $\hat{A}^{(k)}_{0 \vec{x}_{N-k}}$, $\hat{A}^{(k)}_{1 \vec{x}_{N-k}}$  behave as $\pm \sigma_z, \pm \sigma_x$ measurements on the $k$-th party. Hence we can not deduce whether the $N-k+1$ parties are actually gathered together, simply measuring the two Pauli operators on a single qubit, or they are actually different space-like-separated parties performing local Pauli measurements on a $N$-partite GHZ state. 

Lastly, if we take the special case of {$k = N$, namely at most one dishonest party}, the above self-testing result implies that the maximal violation of the Svetlichy inequality self-tests a $N$-partite GHZ state
and the corresponding measurements, in the standard definition of self-testing. 

{The above proves Theorem~\ref{thm:self-test} for $S_N^+$. The result for $S_N^-$ follows straightforwardly if we note that $S_N^-$ can be obtained from $S_N^+$ by taking $A_1^{(i)}\rightarrow -A_1^{(i)}\, \forall i$, which corresponds to rellabling the outcomes of $A_1^{(i)}$:
\begin{align}
\begin{split}   S_N^{-}\rightarrow&\sum_{\vec{x}}(-1)^{\frac{w_{\vec{x}}(w_{\vec{x}}- 1)}{2}}(-1)^{w_{\vec{x}}}\corr{A_{x_1}^{(1)}A_{x_2}^{(2)}\ldots A_{x_N}^{(N)}}\\
    &= \sum_{\vec{x}}(-1)^{\frac{w_{\vec{x}}(w_{\vec{x}}+ 1)}{2}}\corr{A_{x_1}^{(1)}A_{x_2}^{(2)}\ldots A_{x_N}^{(N)}}\\
    &=S_N^+.
    \end{split}
\end{align}
}

\section{Numerical evidence for robust self-testing with dishonest parties}\label{app:robust}

Theorem~\ref{thm:F_D_k} shows that  bounds on the device-independent fidelity with dishonest parties can be obtained from bounds for the device-independent fidelity in the standard Bell scenario. Therefore, our task reduces to finding coefficients $f_k$ and $\mu_k$ that bound the device-independent fidelity as a function of the violation of the $k$-partite Svetlichny inequality in the standard scenario, i.e. we want to bound
\begin{align}\label{eq:Fstandard}
F_{\rm DI}(s_k)\geq f_ks_k-\mu_k,
\end{align}
where
\begin{align}\label{eq:defF}
    F_{\rm DI}(s_k)=\inf_{{\rho}\in \mathcal{S}({s}_k)} \max_{\DE{\Lambda_i}_{i=1}^k}F(\Lambda_{1}\otimes\ldots \Lambda_{k}({\rho}),\Phi^N),
\end{align}
with $\Lambda_i$, $i\in [k]$, being a quantum channel, and the infimum taken over all states that achieve violation $s_k$ for the $k$-partite Svetlichny inequality.

In order to find such bounds, we employ the self-testing from operator inequality (STOPI) method introduced in~\cite{JedSelftest} and further developed in~\cite{Tim-st}. 
The method consists in choosing a particular extractable channel, i.e. fixing the maps $\DE{\Lambda_i}_{i=1}^k$ in \eqref{eq:defF}, and turning \eqref{eq:Fstandard} into an operator inequality that is valid for any quantum state. This is done by noting that for fixed quantum state and channels, the r.h.s. of \eqref{eq:defF} can be turned into
\begin{align}
\begin{split}
    F(\otimes_{i=1}^k \Lambda_i(\rho),\Phi_k)&= \Tr\de{\otimes_{i=1}^k \Lambda_i(\rho)\Phi_k}\\
    &=\Tr\de{\otimes_{i=1}^k \Lambda^\dag_i(\Phi_k)\rho}\\
    &=\Tr\de{K\rho}
    \end{split}
\end{align}
where $K:=\otimes_{i=1}^k \Lambda^\dag_i(\Phi_k)$.
Moreover $s_k=\Tr\de{W\rho}$, where $W$ is the Bell operator that represents the Svetlichny inequality. Therefore a valid bound of the form \eqref{eq:Fstandard} is obtained for every $f_k$
, $\mu_k$ such that following operator
\begin{align}\label{eq:operator}
K-f_kW+\mu_k\id
\end{align}
is positive.
For more details we refer the reader to \cite{Tim-st}.

For $k=2$ the Svetlichny inequality reduces to the CHSH inequality. An analytical lower bound of the form \eqref{eq:Fstandard} has already been proved in~\cite{JedSelftest}, namely
\begin{align}
   F_{\rm DI}(s_2)\geq \frac{4+5\sqrt{2}}{16}s_2-\frac{1+2\sqrt{2}}{4}.
\end{align}

Using the STOPI method of~\cite{JedSelftest,Tim-st} we obtained numerical evidence to conjecture the following analytical results for $k=3,4$:
\begin{align}
    F_{\rm DI}(s_3)&\geq \frac{3(1+\sqrt{2})}{16}s_3-\frac{2+3\sqrt{2}}{4},
\\
    F_{\rm DI}(s_4)&\geq \frac{(1+\sqrt{2})}{16}s_4-\frac{\sqrt{2}}{2}.
\end{align}

In order to derive these bounds, we first note that the Svetlichny inequality consists of binary inputs and outputs, and therefore the analysis can be restricted to qubits and projective measurements~ (see \cite{JedSelftest}). Next, we fix the same extractable channels of refs.~\cite{JedSelftest}, i.e., for each party $i$, $\Lambda_i^\dag$ is a dephasing channel determined by the angle $a_i$ between the observables used for the Svetlichny violation (the observables that define the operator $W$, see eqs. (8) and (9) in ref. \cite{JedSelftest}). A lower bound on $F_{\rm DI}$ can then be obtained by (i) fixing $\mu_s$ as a function of $f_s$ in order to obtain fidelity one for maximal violation, then (ii) minimize \eqref{eq:operator} over the angles between the observables used by the parties for each value of $f_s$, and finally (iii) taking $f_s$ to be the threshold value for which the minimum eigenvalue of \eqref{eq:operator} becomes positive. 

In order to obtain the analytical conjectures, for $k=3$, we numerically computed the minimum eigenvalue of \eqref{eq:operator} with different fixed values of $f_3$. We noted that, for $f_3<f_3^{opt}$, the minimum  is negative and achieved for the angles $\DE{(\pi/2,0,0),(0,0,0)}$ (and the respective permutations). And for $f_3>f_3^{opt}$, the minimum is achieved at $(\pi/4,\pi/4,\pi/4)$ and equals zero. So the analytical value $f_3^{opt}=\frac{3(1+\sqrt{2})}{16}$ is obtained by $f_3$ such that the minimum eigenvalue of \eqref{eq:operator} for the angles  $\DE{(\pi/2,0,0),(0,0,0)}$ becomes zero. The derived analytical bound was subsequently checked using a grid, as previously done in~\cite{Tim-st} for another Bell inequality.

For the case $k=4$, numerical evidence shows that minimum eigenvalue of \eqref{eq:operator} is achieved at $(\pi/4,\pi/4,\pi/4, d)$ (and permutations), with $d$ depending on $f_4$, for $f_4<f_4^{opt}$. And for $f_4>f_4^{opt}$, the minimum is achieved at $(\pi/4,\pi/4,\pi/4,\pi/4)$ and equals zero. Similarly, the analytical value $f_4^{opt}=\frac{1+\sqrt{2}}{16}$ is determined by imposing that the minimum eigenvalue of \eqref{eq:operator} for the angles $(\pi/4,\pi/4,\pi/4, d)$ becomes zero. The result was also numerically checked using a grid.

\section{Networks with clusters}\label{app:clusters}

In order to show that our results extend to the cluster scenario, we start by generalizing Proposition~\ref{prop:sk_violation} to this scenario.

\begin{propA}\label{prop:appsk_cluster}
If the $N$-parties in the network achieve a value $s_N$ for the $N$-partite Svetlichny inequality, then  the same strategy  achieves value $s_{k}$ for a $k$-partite Svetlichny inequality where  the parties are grouped in $k$ different subsets, $\mathcal{S}_1 \ldots \mathcal{S}_k$, and the  outcome of each subset $\mathcal{S}_j$ is given by $a_j'=\bigoplus_{i\in\mathcal{S}_j}a_i$, with
\begin{align}
    s_{k}\geq \frac{s_N}{2^{N-k}}.
\end{align}
\end{propA}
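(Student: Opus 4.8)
The plan is to mimic the proof of Proposition~\ref{prop:appsk_violation} essentially verbatim, but with a partition of the $N$ parties into $k$ \emph{clusters} rather than a partition into $k-1$ singletons plus one big group. The key structural fact being exploited is the same: the Svetlichny coefficient $(-1)^{w_{\vec{x}}(w_{\vec{x}}\pm 1)/2}$ depends on the input string only through its Hamming weight, and Hamming weight is additive across any partition of the bit positions. So first I would fix, without loss of generality by permutation invariance of $S_N^\pm$, the clusters to be contiguous blocks $\mathcal{S}_1,\ldots,\mathcal{S}_k$ of sizes $n_1,\ldots,n_k$ with $\sum_j n_j = N$. For each cluster $\mathcal{S}_j$ and each internal input string $\vec{x}_{\mathcal{S}_j}$, define the grouped binary observable $M^{(\mathcal{S}_j)}_{\vec{x}_{\mathcal{S}_j}}$ with outcome $a'_j = \bigoplus_{i\in\mathcal{S}_j} a_i$, exactly as in \eqref{eq:dishonest_corr}, so that $\langle A^{(1)}_{x_1}\cdots A^{(N)}_{x_N}\rangle$ collapses to $\langle M^{(\mathcal{S}_1)}_{\vec{x}_{\mathcal{S}_1}}\cdots M^{(\mathcal{S}_k)}_{\vec{x}_{\mathcal{S}_k}}\rangle$ because the global correlator only sees the total parity $J(\vec{a})=\bigoplus_j a'_j$.

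Next I would substitute this into $S_N^+$ and organize the sum over $\vec{x}\in\{0,1\}^{\times N}$ by first choosing the \emph{reduced} input string $\vec{y}\in\{0,1\}^{\times k}$ that records, for each cluster, whether we use setting $0$ or setting $1$ for that cluster's block; but note that within each cluster there are still $2^{n_j-1}$ remaining free internal bits once we have fixed $a'_j$-relevant data — actually the cleanest way is to split as in \eqref{eq:SN+_dec}: write $w_{\vec{x}} = w_{\vec{x}_{\mathrm{last}\ N-k}} + w_{\vec{x}_{\mathrm{first}\ k}}$ only if each cluster were a singleton, which is \emph{not} the case here. So instead I would proceed inductively: collapse one cluster at a time. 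Grouping just the last cluster $\mathcal{S}_k$ of size $n_k$ into one effective party, the argument of Proposition~\ref{prop:appsk_violation} (with $N-k+1$ there replaced by $n_k$) gives a strategy on $N-n_k+1$ effective parties achieving $s_{N-n_k+1}\geq s_N/2^{n_k-1}$. Iterating this over all $k$ clusters, and using that $\sum_j(n_j-1) = N-k$, yields $s_k \geq s_N / 2^{\sum_j(n_j-1)} = s_N/2^{N-k}$.

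Alternatively — and this is probably the slicker write-up — I would redo the computation of \eqref{eq:SN+_dec} directly with the $k$-block partition: expanding $(-1)^{w(w+1)/2}$ where $w=\sum_{j=1}^k w_{\vec{x}_{\mathcal{S}_j}}$ via the quadratic-form identity $\binom{\sum_j w_j + 1}{2} \equiv \sum_j \binom{w_j+1}{2} + \sum_{j<l} w_j w_l \pmod 2$, one finds that the cross terms $\sum_{j<l}w_j w_l$ can be absorbed into flipping some of the cluster observables' signs and into the $\pm$ choice of the resulting $k$-partite Svetlichny expression, exactly as the single cross term $\omega_{\vec{x}_k}\omega_{\vec{x}_{N-k}}$ was handled in the two even/odd cases of Appendix~\ref{app:prop-proofs}. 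This expresses $S_N^+$ as a signed sum of $2^{N-k}$ terms, each a $k$-partite Svetlichny expression $S^{\pm}_{k,\ldots}$ evaluated on the cluster observables, and the triangle inequality then gives $s_N \leq 2^{N-k}\, s_k$ with $s_k = \max |S^{\pm}_{k,\ldots}|$, hence the claim; the result for $S_N^-$ follows as before, and permutation invariance removes the contiguity assumption.

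The main obstacle I anticipate is the bookkeeping of signs in the direct approach: tracking how the mod-$2$ cross terms $\sum_{j<l}w_j w_l$ distribute over the individual cluster-weight parities and verifying that every resulting term is genuinely a (variant of a) $k$-partite Svetlichny expression rather than something that merely resembles one. The inductive ``collapse one cluster at a time'' route sidesteps this entirely by reusing Proposition~\ref{prop:appsk_violation} as a black box, at the cost of a slightly less transparent constant-tracking; I would lead with the induction and relegate the explicit $k$-block SOS-style decomposition to a remark, since the induction makes the exponent $N-k=\sum_j(n_j-1)$ manifest and requires no new identities beyond what is already in Appendix~\ref{app:prop-proofs}.
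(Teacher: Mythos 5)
Your proposal is correct, but it reaches the result by a genuinely different route than the paper. The paper's proof is a direct decomposition: it relabels the inputs so that the ``$k$-bit block'' consists of exactly one representative input bit per cluster (the vector $\vec{x}_k=(x_{\mathcal{S}_{1,1}},\ldots,x_{\mathcal{S}_{k,1}})$) and the ``$(N-k)$-bit block'' collects all remaining internal bits; since Hamming weight is additive over \emph{any} bipartition of bit positions, the algebra of Proposition~\ref{prop:appsk_violation} then goes through verbatim, writing $S_N^+$ as a signed sum of $2^{N-k}$ genuine $k$-partite Svetlichny expressions $S^{\pm}_{k,\vec{x}_{N-k}}$ in the cluster observables, and the triangle inequality finishes it. Your stated worry --- that the two-block split only works ``if each cluster were a singleton'' --- applies to the naive split into the first $k$ and last $N-k$ positions, but not to the paper's choice of one representative bit per cluster; you missed that trick, but your inductive workaround (collapse one cluster of size $n_j$ at a time via Proposition~\ref{prop:appsk_violation}, losing a factor $2^{n_j-1}$ each time, and use $\sum_j(n_j-1)=N-k$) is a valid and arguably more modular substitute. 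What the induction buys is a black-box reuse of the earlier proposition with the exponent appearing additively and transparently; what the paper's direct decomposition buys is an explicit identification of which $k$-partite Svetlichny inequality (i.e., which pair of joint observables per cluster) achieves the bound, which is what the subsequent self-testing argument in Appendix~\ref{app:net-selftest} actually needs. Your alternative ``direct'' sketch with the full $k$-block expansion and cross terms $\sum_{j<l}w_jw_l$ is the one place where the argument is not carried to completion --- as you note yourself, it is not immediate that each resulting term is a bona fide binary-input $k$-partite Svetlichny expression --- but since you lead with the induction, the proof as proposed stands.
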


\begin{proof}
We now consider the scenario where different disjoint subsets of  the $N$ parties are collaborating. Specifically let's consider that the $N$ parties are divided into $k$ disjoint subsets $\mathcal{S}_1 \ldots \mathcal{S}_k$ of size $N_1, N_2, \ldots N_k$ respectively, such that
$N_1+ N_2+ \ldots+ N_k=N$. We can then associate a global observable to the joint strategy performed by subset $\mathcal{S}_i$
\begin{align}\label{eq:clusterObser}
    A_{x_{\mathcal{S}_{i,1}}}^{(\mathcal{S}_{i,1})}    \ldots    A_{x_{\mathcal{S}_{i,N_i}}}^{(\mathcal{S}_{i,N_i})}\rightarrow M^{(\mathcal{S}_i)}_{x_{\mathcal{S}_{i,1}}\ldots x_{\mathcal{S}_{i,N_i}}}
\end{align}
and let us split the string of outcome labels into the following two vectors
\begin{align}
    \vec{x}_k=(x_{\mathcal{S}_{1,1}},x_{\mathcal{S}_{2,1}},\ldots,x_{\mathcal{S}_{k,1}})
\end{align}
\begin{align}
    \vec{x}_{N-k}=(x_{\mathcal{S}_{1,2}},\ldots,x_{\mathcal{S}_{1,N_1}},x_{\mathcal{S}_{2,2}},\ldots,x_{\mathcal{S}_{2,N_2}},\ldots x_{\mathcal{S}_{k,2}},\ldots,x_{\mathcal{S}_{k,N_k}})\label{eq:xn-k} \, ,
\end{align}
{where we numbered the parties of each subset $\mathcal{S}_i$ as $\lbrace \mathcal{S}_{i,1},\ldots, \mathcal{S}_{i, N_i} \rbrace$}.

The Svetlichny inequality can then be re-written as
\begin{align} \label{eq:SN+_dec_cluster}
    S_N^+=&\sum_{\vec{x}\in\DE{0,1}^{\times N}} (-1)^{\frac{\omega_{\vec{x}}\cdot(\omega_{\vec{x}}+1)}{2}}\corr{M^{(\mathcal{S}_1)}_{x_{\mathcal{S}_{1,1}}\ldots x_{\mathcal{S}_{1,N_1}}}\ldots M^{(\mathcal{S}_k)}_{x_{{\mathcal{S}_{k,1}}}\ldots x_{\mathcal{S}_{k,N_k}}}}\\
    =&\sum_{\vec{x}_{N-k}\in\DE{0,1}^{\times N-k}}\sum_{\vec{x}_{k}\in\DE{0,1}^{\times k}}(-1)^{\frac{(\omega_{\vec{x}_k}+\omega_{\vec{x}_{N-k}})\cdot(\omega_{\vec{x}_k}+\omega_{\vec{x}_{N-k}}+1)}{2}}\corr{M^{(s_1)}_{x_{s_{1,1}}x_{s_{1,2}}\ldots x_{s_{1,N_1}}}\ldots M^{(s_k)}_{x_{s_{k,1}}x_{s_{k,2}}\ldots x_{s_{k,N_k}}}}\\
     =&\sum_{\vec{x}_{N-k}\in\DE{0,1}^{\times N-k}} (-1)^{\frac{\omega_{\vec{x}_{N-k}}\cdot(\omega_{\vec{x}_{N-k}}+1)}{2}}\sum_{\vec{x}_{k}\in\DE{0,1}^{\times k}}(-1)^{\frac{\omega_{\vec{x}_k}\cdot(\omega_{\vec{x}_k}+1)}{2}+\omega_{\vec{x}_{k}}\omega_{\vec{x}_{N-k}}}\corr{M^{(s_1)}_{x_{s_{1,1}}x_{s_{1,2}}\ldots x_{s_{1,N_1}}}\ldots M^{(s_k)}_{x_{s_{k,1}}x_{s_{k,2}}\ldots x_{s_{k,N_k}}}}\\
      =&\sum_{\substack{\vec{x}_{N-k}\in\DE{0,1}^{\times N-k}\\\omega_{\vec{x}_{N-k}} \text{ odd}}} (-1)^{\frac{\omega_{\vec{x}_{N-k}}\cdot(\omega_{\vec{x}_{N-k}}+1)}{2}}\sum_{\vec{x}_{k}\in\DE{0,1}^{\times k}}(-1)^{\frac{\omega_{\vec{x}_k}\cdot(\omega_{\vec{x}_k}-1)}{2}}\corr{M^{(s_1)}_{x_{s_{1,1}}x_{s_{1,2}}\ldots x_{s_{1,N_1}}}\ldots M^{(s_k)}_{x_{s_{k,1}}x_{s_{k,2}}\ldots x_{s_{k,N_k}}}}\\
      &+\sum_{\substack{\vec{x}_{N-k}\in\DE{0,1}^{\times N-k}\\\omega_{\vec{x}_{N-k}} \text{ even}}} (-1)^{\frac{\omega_{\vec{x}_{N-k}}\cdot(\omega_{\vec{x}_{N-k}}+1)}{2}}\sum_{\vec{x}_{k}\in\DE{0,1}^{\times k}}(-1)^{\frac{\omega_{\vec{x}_k}\cdot(\omega_{\vec{x}_k}+1)}{2}}\corr{M^{(s_1)}_{x_{s_{1,1}}x_{s_{1,2}}\ldots x_{s_{1,N_1}}}\ldots M^{(s_k)}_{x_{s_{k,1}}x_{s_{k,2}}\ldots x_{s_{k,N_k}}}}\\
      =&\sum_{\substack{\vec{x}_{N-k}\in\DE{0,1}^{\times N-k}\\\omega_{\vec{x}_{N-k}} \text{ odd}}}(-1)^{\frac{\omega_{\vec{x}_{N-k}}\cdot(\omega_{\vec{x}_{N-k}}+1)}{2}} S_{k,\vec{x}_{N-k}}^- +\sum_{\substack{\vec{x}_{N-k}\in\DE{0,1}^{\times N-k}\\\omega_{\vec{x}_{N-k}} \text{ even}}}(-1)^{\frac{\omega_{\vec{x}_{N-k}}\cdot(\omega_{\vec{x}_{N-k}}+1)}{2}} S_{k,\vec{x}_{N-k}}^+
    \end{align}
    where the $S_{k,\vec{x}_{N-k}}^{\pm}$ is the k-partite Svetlichny inequality where the observables corresponding to each set $\mathcal{S}_i$ are labelled by $\vec{x}_{N-k}$ given in \eqref{eq:xn-k}.

    The rest of the proof follows as the proof of Proposition~\ref{prop:sk_violation}.
    \end{proof}
    {Similarly, the self-testing proof in Appendix \ref{app:net-selftest} can be generalised by replacing the operators $A_{x_i}^{(i)}$ with $M^{(s_i)}_{x_{s_{i,1}} x_{s_{i,2}}\ldots x_{s_{i,N_i}}}$ for $i = 1,\ldots,k-1$ and relabeling $M^{(\mathcal{D})}_{\vec{x}_{\mathcal{D}}}$ as $M^{(s_k)}_{x_{s_{k,1}} x_{s_{k,2}}\ldots x_{s_{k,N_k}}}$. This directly proves the following result
    \begin{theorem}\label{thm:self-test}
The maximum violation of the $N$-partite Svetlichny inequality where  the parties are grouped in $k$ disjoint collaborating subsets, $\mathcal{S}_1 \ldots \mathcal{S}_k$, self-tests that a $k-$partite GHZ state is shared by the different clusters. Moreover, the same correlations also self-tests a set of Pauli observables for the joint measurements performed by each cluster. 
\end{theorem}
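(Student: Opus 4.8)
The strategy is to run the proof of Appendix~\ref{app:net-selftest} essentially verbatim, with the observables $A_{x_i}^{(i)}$ of the individual honest parties replaced everywhere by the joint cluster observables $M^{(\mathcal{S}_i)}_{x_{\mathcal{S}_{i,1}}\ldots x_{\mathcal{S}_{i,N_i}}}$ defined in \eqref{eq:clusterObser}. The starting point is the decomposition of $S_N^+$ established in the proof of Proposition~\ref{prop:appsk_cluster}, which already rewrites the $N$-partite Svetlichny expression as the same signed sum of $k$-partite Svetlichny expressions $S^{\pm}_{k,\vec x_{N-k}}$ that appears in the dishonest-parties case, the only difference being that each cluster $\mathcal{S}_i$ now enters through a single binary observable $M^{(\mathcal{S}_i)}_{\cdot,\vec x_{N-k}}$. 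The one preliminary remark needed is that each such cluster observable is a Hermitian reflection: after dilating the joint measurement of the cluster to a projective one (Naimark), $M^{(\mathcal{S}_i)}_{\vec x}$ is the parity observable $\sum_{\mathrm{even}}\Pi-\sum_{\mathrm{odd}}\Pi$ of a projective measurement, so $(M^{(\mathcal{S}_i)}_{\vec x})^2=\id$. This is precisely the property of $A_{x_i}^{(i)}$ that every subsequent algebraic step relies on.

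With this in hand I would write the sum-of-squares decomposition of the shifted operator $2^{N-1}\sqrt 2\,\id-\hat S_N^+$ exactly as in \eqref{eq:SOS_SNdishonest}, now assembled from CHSH-type squares between clusters $\mathcal{S}_1$ and $\mathcal{S}_2$ with the settings of $\mathcal{S}_3,\ldots,\mathcal{S}_k$ held fixed. This certifies that the maximal cluster-grouped quantum value of $S_N^+$ is again $2^{N-1}\sqrt 2$ and yields, for a maximally violating state $\ket{\psi}$, the analogues of conditions \eqref{app:SOSodd1}--\eqref{app:SOSeven2} with $A_{x_i}^{(i)}$ replaced by $M^{(\mathcal{S}_i)}_{\cdot,\vec x_{N-k}}$. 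Introducing the substitutions \eqref{app:SNsubs}, with $\hat Z_i,\hat X_i$ now attached to the cluster $\mathcal{S}_i$, and evaluating the SOS conditions on the same distinguished input strings $\vec x_{N-2}$ used in Appendix~\ref{app:net-selftest}, one recovers the stabiliser relations $\hat S_i\ket{\psi}=\ket{\psi}$ for the $k$ generators $\hat S_i=\hat X_i\bigotimes_{j\in[k]\setminus\{i\}}\hat Z_j$ of the fully-connected graph state on $k$ vertices. Because $(M^{(\mathcal{S}_1)}_{\vec x})^2=\id$, the operators $\hat X_1,\hat Z_1$ anticommute by construction, and the remaining qubit-like properties \eqref{app:qubitprop} follow from the stabiliser relations exactly as in the cited proof.

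Finally I would define the SWAP isometry $\Lambda=\Lambda_{\mathcal{S}_1}\otimes\cdots\otimes\Lambda_{\mathcal{S}_k}$ of the form \eqref{app:iso}, where now every branch $\Lambda_{\mathcal{S}_i}$ acts jointly on all $N_i$ subsystems of cluster $\mathcal{S}_i$ together with one ancilla qubit in $\ket{+}$ (in the dishonest-parties proof only the last branch was global; here all $k$ of them may be, which is harmless since the self-testing claim is only up to a collective operation on each cluster). Pushing the substitutions from \eqref{app:qubitprop} and the stabiliser relations through $\Lambda$ reproduces \eqref{iso1} and \eqref{isomeas}: $\ket{\psi}$ is mapped to the $k$-partite fully-connected graph state $\ket{\psi_G}$, which is local-unitary equivalent to the $k$-partite GHZ state, tensored with junk, and the cluster observables are mapped to the Pauli operators of \eqref{app:SNmeas} acting on $\ket{\psi_G}$ --- which is exactly the asserted self-testing of the GHZ state and of the clusters' joint measurements. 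I expect the only real work to be bookkeeping: checking that the sign factors $(-1)^{\omega(\omega\pm1)/2}$ coming out of the decomposition in Proposition~\ref{prop:appsk_cluster} line up with those in the SOS \eqref{eq:SOS_SNdishonest} and in the substitution table \eqref{app:SNsubs}, so that the recovered conditions are genuinely the graph-state stabilisers rather than a signed variant. Since Proposition~\ref{prop:appsk_cluster} outputs the identical signed sum of $S^{\pm}_{k,\vec x_{N-k}}$ terms as the dishonest-parties analysis, this reduces to the same combinatorial check already carried out in Appendix~\ref{app:net-selftest}, and no genuinely new difficulty arises.
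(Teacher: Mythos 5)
Your proposal is correct and follows essentially the same route as the paper: the paper's own proof of this theorem is the single remark that the Appendix~\ref{app:net-selftest} argument generalises verbatim upon replacing each $A_{x_i}^{(i)}$ by the cluster observable $M^{(\mathcal{S}_i)}_{x_{\mathcal{S}_{i,1}}\ldots x_{\mathcal{S}_{i,N_i}}}$ and relabelling $M^{(\mathcal{D})}_{\vec{x}_{\mathcal{D}}}$ as the $k$-th cluster observable, which is exactly the substitution you carry out. Your additional explicit observations --- that the cluster observables square to the identity after Naimark dilation (the only property the algebra uses) and that all $k$ branches of the SWAP isometry may now act globally on their respective clusters --- are details the paper leaves implicit, and they do not change the argument.
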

}

\end{document}